\newcommand{\de}{\: \mathrm{d}}
\newcommand{\ddt}[2]{\frac{\mathrm{d}^2#1}{\mathrm{d}#2^2}}
\newcommand{\R}{\mathbb{R}}
\newcommand{\N}{\mathbb{N}}
\renewcommand{\c}{\overline{c}}
\newcommand{\tr}{\mathrm{tr}}
\newcommand{\SL}{\mathrm{SL}(2,\R)}
\newtheorem{theorem}{Theorem}[section]
\newtheorem{lemma}[theorem]{Lemma}
\newtheorem{definition}[theorem]{Definition}
\theoremstyle{definition}
\newtheorem{remark}[theorem]{Remark}
\title{Symmetry-induced quasicrystalline waveguides}
\author{Bryn Davies\thanks{Department of Mathematics, Imperial College London, South Kensington Campus, London, SW7~2AZ, UK (email: bryn.davies@imperial.ac.uk).} \and Richard V. Craster\footnotemark[1]}
\date{}
\begin{document}
\maketitle

\begin{abstract}
    Introducing an axis of reflectional symmetry in a quasicrystal leads to the creation of localised edge modes that can be used to build waveguides. We develop theory that characterises reflection-induced localised modes in materials that are formed by recursive tiling rules. This general theory treats a one-dimensional continuous differential model and describes a broad class of both quasicrystalline and periodic materials. We present an analysis of a material based on the Fibonacci sequence, which has previously been shown to have exotic, Cantor-like spectra with very wide spectral gaps. Our approach provides a way to create localised edge modes at frequencies within these spectral gaps, giving strong and stable wave localisation. We also use our general framework to make a comparison with reflection-induced modes in periodic materials. These comparisons show that while quasicrystalline waveguides enjoy enhanced robustness over periodic materials in certain settings, the benefits are less clear if the decay rates are matched. This shows the need to carefully consider equivalent structures when making robustness comparisons and to draw conclusions on a case-by-case basis, depending on the specific application.
\end{abstract}

\noindent\textbf{Keywords}: fractal spectrum, transfer matrix, aperiodic tiling, non-linear recursion relation,  chaos, robust edge modes

\section{Introduction}

A widespread goal of wave physicists and engineers is to design and build devices that efficiently direct wave energy along specific paths. A common approach is to start with an inhomogeneous material that has a spectral gap (a range of frequencies that do not propagate through the material) and perturb the material by introducing a defect or interface; if done correctly, this perturbation will create \emph{edge modes} that are strongly localised along the path of the defect. The eigenfrequency of these edge modes typically falls within the spectral gaps of the original material \cite{asboth2016short, hoefer2011defect, craster2022ssh}. This strategy is typically employed for periodic materials, where the spectral gap structure is easy to understand using Floquet-Bloch theory \cite{kuchment1993floquet} and the technique has led to a rich variety of waveguides, capable of guiding waves robustly along intricate paths \cite{khanikaev2013photonic, rechtsman2013photonic, makwana2018geometrically}. In this work, we will show how to generalise this principle to non-periodic settings, aiming to realise the under utilised potential of such structures. In particular, the potential advantages of quasicrystalline waveguides include very large spectral gaps, even at high frequencies, \cite{zolla1998remarkable}, very broadband effects, high quality factors \cite{marti2022high} and robustness \cite{marti2021edge, marti2022high, liu2022topological, ni2019observation}.

The media studied here are generated recursively by tiling different materials according to tiling rules. The set of such patterns is very broad and has been studied extensively by mathematicians. They gained popular attention following the discovery of an aperiodic set of tiles by Berger in 1966 \cite{berger1966undecidability}, which disproved previous conjectures that if a set of tiles can tile the plane, then they can always be arranged to do so periodically \cite{wang1961proving}. Many such aperiodic tilings have since been discovered, most notably those composed of just two tiles that were discovered by Penrose in the 1970s \cite{penrose1974role}. Following the discovery of a metallic alloy with quasicrystalline atomic structure (by Dan Shechtman, who would win a Nobel prize for this work \cite{shechtman1984metallic}), the excitement spread beyond mathematics and the field of quasicrystals was born.

The main example we will use to demonstrate our results will be based on a Fibonacci tiling rule, where each new term in the sequence is formed by combining the two previous terms. The wave transmission properties of such materials have been studied extensively, such that their spectral gap structure is well understood \cite{kohmoto1983localization, kohmoto1987critical, hassouani2006surface, guenneau2008acoustic, gei2010wave, gei2018waves, gei2020phononic}. It has been shown that this structure has self similarity \cite{gei2018waves} and demonstrates a Cantor-like emergence of nested spectral gaps, converging to a fractal pattern \cite{gei2018waves}. These properties have even been demonstrated experimentally for a system of structured rods \cite{gei2020phononic}. Typically, these previous studies have used periodic approximants to understand the structure of the spectrum for increasing iterations in the Fibonacci sequence. These \emph{supercell} methods have the advantage that Floquet-Bloch theory can be used as the structure is periodic. As the size of the repeating supercell increases, the Floquet-Bloch spectrum becomes increasingly intricate, yielding the spectrum of the quasicrystal in the limit (this is shown nicely in Figure~7 of \cite{spurrier2020kane}).

A common feature of the previous studies of waves in Fibonacci quasicrystals, and also of the present work, is that the materials considered are one-dimensional. Some general principles for understanding the diffraction of one-dimensional two-material substitutional systems were established by \cite{kolar1993quasicrystals}. A convenient property of one-dimensional materials is that their properties can be described concisely using transfer matrices. A review of the use of transfer matrices for modelling wave propagation in one-dimensional lossless media, including quasicrystals, can be found in \cite{SANCHEZSOTO2012191}. To study quasicrystalline unbounded materials we will need to consider the limit of the product of infinitely many transfer matrices, ordered according to the relevant tiling rule. This is a subtle problem in general \cite{felbacq1998limit, zolla1998wave}, however in the specific case of the Fibonacci problem the traces of successive products of transfer matrices can be related by a second-order non-linear recursion relation, discovered by Kohmoto in 1983 \cite{kohmoto1983localization}, which will simplify the analysis.

It has long been known that quasicrystalline patterns can be obtained by projecting a non-commensurate slice of a higher-dimensional periodic pattern (known as a \emph{superspace}). If we are able to relate the properties of the two materials, then this offers a very promising approach for understanding the wave transmission properties of quasicrystals, since the properties of the periodic superspace material are straightforward to understand using Floquet-Bloch analysis. This was utilised by \cite{johnson2008quasicrystal} to predict the density of states in quasicrystalline materials, obtained via projection. Similarly, projection of periodic materials has been used to derive homogenised regimes for quasicrystalline materials \cite{bouchitte2010homogenization, wellander2018two}. There are outstanding questions related to understanding the significance of Bloch wavenumbers in the superspace material and it is not clear how to relate edge modes in the superspace and to those in the projected quasicrystal. We will not explore these important questions in this work, but highlight this as an appealing avenue for future study.

In this work, we will create localised modes by introducing an axis of reflectional symmetry to the material. This strategy has been used previously to create reflection-induced edge modes, for example in quasicrystalline arrays of point scatterers on elastic plates by \cite{marti2021edge} and, experimentally, in a quasicrystalline acoustic waveguide by \cite{apigo2019observation}. We will derive novel theory for characterising these modes, which emerge as eigenmodes of a continuous differential model. The model we shall consider will be one dimensional, as we will focus on the axis perpendicular to the axis of reflection in order to capture the decaying modes (see \Cref{fig:GuideSketch} for an example). The other type of localised edge modes that have been studied extensively are those that occur at the edges of a quasicrystalline material with finite dimensions \cite{xia2020topological, kraus2012topological, bandres2016topological, ni2019observation, pal2019topological, pu2022topological, apigo2018topological}.

\begin{figure}
    \centering
    \includegraphics[width=0.5\linewidth]{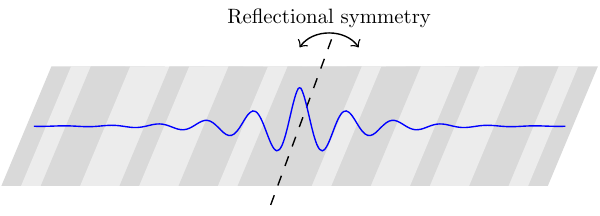}
    \caption{The waveguides studied in this work are formed by taking a pattern of lamina, arranged according to an iterative tiling rule, and introducing an axis of reflectional symmetry. Waves of specific frequencies are strongly guided along the interface formed by the reflection. This sketch shows an example of such a material, along with a representation of a typical decaying wave form. The decaying mode sketched along the axis perpendicular to the axis of reflection; we study one-dimensional problems in this work, considering only this axis.}
    \label{fig:GuideSketch}
\end{figure}

In the setting of localised modes in periodic media, one of the main avenues of investigation is the study of modes that are \emph{topologically protected}. These are localised modes that, due to the topological properties of the underlying periodic medium, experience enhanced robustness with respect to imperfections \cite{khanikaev2013photonic, rechtsman2013photonic}. For non-periodic structures, it is less clear how to define the topological indices which these theories rely on. One quantity of interest is the integrated density of states (IDS), which captures the number of eigenfrequencies below a given frequency and has been shown to be topologically invariant in certain settings. For quasicrystalline systems formed by parametrically modulating a periodic structure (with rational and irrational values of the parameter yielding periodic and quasiperiodic systems, respectively), it has been shown that the IDS can be used to attribute a unique label to each spectral gap \cite{apigo2018topological}. It was, further, shown that this label can be used to predict which gaps will support edge modes in finite crystals \cite{apigo2018topological}. This argument can be understood using the fact that certain parameter values yield commensurate structures and changes in the IDS can be used to predict the number of edge modes that cross the spectral gap in the interval of parameter space between two subsequent commensurate values \cite{xia2020topological}. However, it remains an open question to connect, in general, these topological properties to the robustness of edge modes created in any of the spectral gaps.

Even if the link between topological properties and robustness of edge modes remains unclear for quasicrystalline settings, several studies have highlighted interesting robustness properties. In particular, the robustness of reflection-induced edge modes was explored by \cite{marti2021edge}, who randomly perturbed the positions of a quasicrystalline array of point scatterers on elastic plates. They observed that the reflection-induced edge modes are robust in the sense that they persist in spite of the random imperfections and their eigenfrequencies were ``only slightly shifted''. Some robustness conclusions have also been observed for the edge states in finite-dimensional quasicrystals \cite{liu2022topological, ni2019observation}. We explore robustness of the reflection-induced edge modes that occur in our setting by randomly perturbing the material parameters. Our results suggest that their robustness is comparable to a periodic mode with the same decay rate.

The general results we develop in this work, which are presented in \Cref{sec:general}, apply to any one-dimensional medium that can be constructed by a recursive tiling rule, provided the constituent materials satisfy a symmetry condition (which could be relaxed, in principle, but would lead to less concise expressions). In particular, the material does not necessarily have to be either quasicrystalline or periodic. In Sections~\ref{sec:fibonacci} and~\ref{sec:periodic}, we will present an example of each, for comparison. The quasicrystalline example will be based on a Fibonacci material, building on the previous studies of such structures. We will then present a simple periodic example, in \Cref{sec:periodic}, before studying the relative robustness of the quasicrystalline and periodic materials in \Cref{sec:robust}.

\section{General theory of reflection-induced edge modes} \label{sec:general}

\subsection{Problem setting}

We consider a class of media that are constructed using iterative tiling algorithms that repeatedly join together two different bounded media. We label the two bounded media as $A$ and $B$ and assume that they both have unit width, such that they can be described by the functions
\begin{equation}
    c_A:[0,1)\to(0,\infty) \quad\text{and}\quad c_B:[0,1)\to(0,\infty).
\end{equation}
Subsequently, we can introduce some notation to describe media constructed by repeatedly joining $A$ and $B$ together. We note at this early stage that we use the convention $0\notin\N$.
\begin{definition}[Tiling labels]
Given $N\in\N$ and a set of labels $X_1,X_2,\dots,X_N\in\{A,B\}$, we will use the label $X_1X_2\cdots X_N$ for the tiled material described by the function $c_{X_1X_2\cdots X_N}:[0,N)\to(0,\infty)$ which satisfies
\begin{equation*}
    c_{X_1X_2\cdots X_N}(x)=c_{X_n}(x-n+1), \qquad \text{for }x\in[n-1,n), \, n=1,\dots,N.
\end{equation*}
\end{definition}
For example, we will write AB to mean the medium formed by joining $A$ and $B$, so that $c_{AB}(x)$ is equal to $c_A(x)$ for $0\leq x<1$ and equal to $c_B(x-1)$ for $1\leq x<2$. We can now introduce the notion of a reflective recursive medium, which will be the main object of study in this work:

\begin{definition}[Recursive media]
Given two starting labels $A$ and $B$, we define a \emph{sequence of recursive media} to be a sequence of tiling labels $\{M_N:N\in\N\}$ such that $M_1$ is a tiling of finitely many labels from $\{A,B\}$ and, for $N\geq2$, $M_N$ is a tiling of $M_{N-1}$ with a combination of finitely many labels from $\{M_1,\dots,M_{N-1}\}$. That is, $M_N$ is given by
\begin{equation*}
    M_N=M_{N-1}F(M_1,\dots,M_{N-1}),
\end{equation*}
where $F(M_1,\dots,M_{N-1})$ is a tiling of finitely many elements of $\{M_1,\dots,M_{N-1}\}$.
\end{definition}

\begin{definition}[Reflected recursive media]
Given a sequence of recursive media $\{M_N:N\in\N\}$ we define a \emph{reflected recursive medium} to be the medium obtained by letting $N\to\infty$ and then reflecting in $x=0$. That is, the medium described by the function $\c_{M_\infty}:\R\to(0,\infty)$ which is such that 
\begin{equation*}
    \c_{M_\infty}(x)=\begin{cases}
    c_{M_\infty}(x) & x\geq0, \\ c_{M_\infty}(-x) & x<0.
    \end{cases}
\end{equation*}
\end{definition}

The class of reflective recursive media is broad and contains many different types of media. For example both the periodic media $M_\infty=AAAA\dots$ and $M_\infty=ABABABAB\dots$ fall into this framework (see \Cref{sec:periodic} for more on this). In \Cref{sec:fibonacci}, we will study a version of the famous Fibonacci quasicrystal, which is not periodic. The spectral gaps of similar materials were previously studied by \cite{kohmoto1983localization, kohmoto1987critical, guenneau2008acoustic, gei2010wave, gei2018waves, gei2020phononic, hamilton2021effective}. This example will be referred to as the Fibonacci medium and is given by the sequence of recursive media $\{F_N:N\in\N\}$ defined by
\begin{equation} \label{defn:Fibonacci}
    F_1=A, \quad F_2=AB \quad\text{and} \quad F_N=F_{N-1}F_{N-2}\quad\text{for }N\geq3.
\end{equation}
This sequence reads $F_3=ABA$, $F_4=ABAAB$, $F_5=ABAABABA$, and so on. It is closely related to the Fibonacci sequence. In particular, the number of $A$s in $F_N$ is equal to the $N$th Fibonacci number, while the number of $B$s in $F_N$ is give by the $(N-1)$th Fibonacci number. The reflected material formed by this quasicrystalline tiling rule is sketched in \Cref{fig:Fibonacci_sketch}.

\begin{figure}
    \centering
    \includegraphics[width=\textwidth]{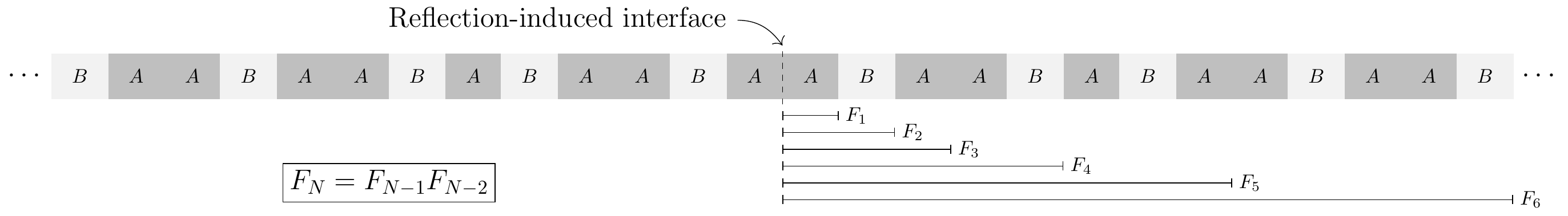}
    \caption{The central quasicrystalline example considered in this work is formed by a Fibonacci-type tiling rule, which is reflected to create an interface. Localised edge modes are created, which decay exponentially away from this interface.}
    \label{fig:Fibonacci_sketch}
\end{figure}

\subsection{Transfer matrices}

We recall some basic properties of transfer matrices, which will be important in what follows. See \cite{SANCHEZSOTO2012191} for a review. Given a one-dimensional Helmholtz equation, posed on a bounded domain of length $L>0$,
\begin{equation} \label{eq:helmholtz_general}
\ddt{u}{x}+ \frac{\omega^2}{c^2(x)}u=0, \quad x\in [0,L],
\end{equation}
where $c\not\equiv0$, we can define a transfer matrix $T(\omega)$ to be such that
\begin{equation} \label{eq:transfer_left}
\begin{pmatrix} u(L) \\ u'(L) \end{pmatrix} =
T(\omega) \begin{pmatrix} u(0) \\ u'(0) \end{pmatrix}.
\end{equation}
We can see that $T$ is given in terms of the functions $\psi_{1,\omega}$ and $\psi_{2,\omega}$:
\begin{align} \label{eq:psi12}
\begin{cases}
\left(\ddt{}{x}+ \frac{\omega^2}{c^2(x)}\right)\psi_{1,\lambda}=0, \\
\psi_{1,\omega}(0) = 1, \\
\psi_{1,\omega}'(0) = 0,
\end{cases}
\quad\text{and}\qquad
\begin{cases}
\left(\ddt{}{x}+ \frac{\omega^2}{c^2(x)}\right)\psi_{2,\lambda}=0, \\
\psi_{2,\omega}(0) = 0, \\
\psi_{2,\omega}'(0) = 1,
\end{cases}
\end{align}
as
\begin{align} \label{eq:Texplicit}
T(\omega)=\begin{pmatrix} \psi_{1,\omega}(L) & \psi_{2,\omega}(L) \\ \psi_{1,\omega}'(L) & \psi_{2,\omega}'(L) \end{pmatrix}.
\end{align}

\begin{lemma} \label{lem:det1}
Any transfer matrix $T$ must be such that, for all frequencies $\omega\in\R$, it holds that $T(\omega)\in\SL$, where $\SL$ is the group of $2\times2$ real matrices with determinant 1.
\end{lemma}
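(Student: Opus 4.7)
The plan is to exploit the fact that the Helmholtz equation \eqref{eq:helmholtz_general} contains no first-derivative term, so Abel's identity forces the Wronskian of any two solutions to be constant in $x$. The determinant of $T(\omega)$ in the representation \eqref{eq:Texplicit} is precisely the Wronskian of $\psi_{1,\omega}$ and $\psi_{2,\omega}$ evaluated at $x=L$, and the initial conditions in \eqref{eq:psi12} are chosen so that this Wronskian equals $1$ at $x=0$.

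In more detail, I would first define $W(x):=\psi_{1,\omega}(x)\psi_{2,\omega}'(x)-\psi_{1,\omega}'(x)\psi_{2,\omega}(x)$ and differentiate, using the ODEs satisfied by $\psi_{1,\omega}$ and $\psi_{2,\omega}$ in \eqref{eq:psi12}, to obtain
\begin{equation*}
W'(x)=\psi_{1,\omega}(x)\psi_{2,\omega}''(x)-\psi_{1,\omega}''(x)\psi_{2,\omega}(x)=-\frac{\omega^2}{c^2(x)}\bigl(\psi_{1,\omega}(x)\psi_{2,\omega}(x)-\psi_{1,\omega}(x)\psi_{2,\omega}(x)\bigr)=0.
\end{equation*}
Hence $W$ is constant on $[0,L]$. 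Evaluating at $x=0$ using the initial conditions from \eqref{eq:psi12} gives $W(0)=1\cdot 1-0\cdot 0=1$, so $W(L)=1$. But from \eqref{eq:Texplicit} we have $\det T(\omega)=\psi_{1,\omega}(L)\psi_{2,\omega}'(L)-\psi_{2,\omega}(L)\psi_{1,\omega}'(L)=W(L)=1$.

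Finally, to conclude $T(\omega)\in\SL$ rather than merely in $\mathrm{SL}(2,\C)$, I would note that for $\omega\in\R$ the coefficient $\omega^2/c^2(x)$ is real-valued, and the initial data in \eqref{eq:psi12} are real, so by standard ODE theory $\psi_{1,\omega}$ and $\psi_{2,\omega}$ are real-valued on $[0,L]$. Therefore the entries of $T(\omega)$ are real.

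The argument is essentially routine once one recognises that the problem is really about Abel's formula applied to an equation with no damping term; the only mild subtlety is confirming the correct normalisation of the Wronskian from the initial conditions, which is why the specific choice in \eqref{eq:psi12} matters. I do not anticipate a real obstacle here.
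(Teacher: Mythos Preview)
Your proof is correct and follows essentially the same route as the paper: both arguments show that the Wronskian $\psi_{1,\omega}\psi_{2,\omega}'-\psi_{2,\omega}\psi_{1,\omega}'$ is constant (the paper integrates its derivative over $[0,L]$, you differentiate directly), then use the initial conditions to fix its value at $1$. Your added remark that the entries of $T(\omega)$ are real for $\omega\in\R$ is a small but welcome detail the paper leaves implicit.
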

\begin{proof}
Using the definitions of $\psi_{1,\omega}$ and $\psi_{2,\omega}$, we have that 
\begin{align}
    0&\stackrel{\eqref{eq:psi12}}{=} \int_0^L \psi_{1,\omega}(x)\psi_{2,\omega}''(x)-\psi_{2,\omega}(x)\psi_{1,\omega}''(x) \de x\\
    &= \int_0^L \left( \psi_{1,\omega}(x)\psi_{2,\omega}'(x)-\psi_{2,\omega}(x)\psi_{1,\omega}'(x) \right)' \de x \\
    &\stackrel{\eqref{eq:psi12}}{=} \psi_{1,\omega}(L)\psi_{2,\omega}'(L)-\psi_{2,\omega}(L)\psi_{1,\omega}'(L)-1
    \stackrel{\eqref{eq:Texplicit}}{=}\det(T(\omega))-1,
\end{align}
so it follows that $\det(T(\omega))=1$ for all $\omega\in\R$.
\end{proof}

Since the determinant of a matrix is the product of its eigenvalues, elements of $\SL$ fall into one of three categories:
\begin{lemma}[Partition of $\SL$] \label{lem:eigenvalues}
For a given a matrix $T\in\SL$, one of the following must hold:
\begin{enumerate}[noitemsep,topsep=0pt]
    \item $T$ has two real eigenvalues $\lambda_1$ and $\lambda_2$ which satisfy $|\lambda_1|<1$ and $|\lambda_2|>1$. In this case, $|\tr(T)|>2$ and $T$ is said to be hyperbolic.
    \item $T$ has one real eigenvalue $\lambda_1$ which satisfies $\lambda_1=\pm 1$.  In this case, $|\tr(T)|=2$ and $T$ is said to be parabolic.
    \item $T$ has two complex eigenvalues $\lambda_1$ and $\lambda_2$ which satisfy $\lambda_1=\lambda_2^*$ (with $^*$ denoting the complex conjugate). In this case, $|\tr(T)|<2$ and $T$ is said to be elliptic.
\end{enumerate}
\end{lemma}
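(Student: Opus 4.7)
The plan is to read the eigenvalues directly off the characteristic polynomial of $T$. Since $T\in\SL$, \Cref{lem:det1}-style reasoning gives $\det(T)=1$, so the characteristic polynomial reduces to
\begin{equation*}
    \lambda^2-\tr(T)\lambda+1=0,
\end{equation*}
with roots $\lambda_{1,2}=\tfrac{1}{2}\bigl(\tr(T)\pm\sqrt{\tr(T)^2-4}\bigr)$. By Vieta's formulas the eigenvalues satisfy $\lambda_1\lambda_2=1$ and $\lambda_1+\lambda_2=\tr(T)$. The three cases listed in the statement correspond exactly to the sign of the discriminant $\tr(T)^2-4$, so the strategy is simply to treat each of the three signs in turn.

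In the elliptic regime $|\tr(T)|<2$ the discriminant is negative, so the roots are non-real. Since the polynomial has real coefficients they must be complex conjugates, $\lambda_2=\lambda_1^*$, and the relation $\lambda_1\lambda_2=1$ then reads $|\lambda_1|^2=1$, placing both eigenvalues on the unit circle. In the parabolic regime $|\tr(T)|=2$ the discriminant vanishes, leaving a repeated real root $\lambda_1=\tr(T)/2=\pm1$. In the hyperbolic regime $|\tr(T)|>2$ the discriminant is strictly positive, so the two roots are real, distinct, and reciprocal (by $\lambda_1\lambda_2=1$).

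The only delicate point, and the main thing to pin down carefully, is the strict inequalities $|\lambda_1|<1<|\lambda_2|$ in the hyperbolic case. From $|\lambda_1||\lambda_2|=1$ one gets that one eigenvalue has absolute value $\leq 1$ and the other $\geq 1$, so the question is to rule out the boundary case $|\lambda_1|=|\lambda_2|=1$. Being real and having product $1$ would force $\lambda_1=\lambda_2=+1$ or $\lambda_1=\lambda_2=-1$, but this would contradict the strict positivity of the discriminant, which requires the two roots to be distinct. Hence both inequalities are strict. The three cases are mutually exclusive and together exhaust all values of $\tr(T)\in\R$, so no further bookkeeping is required and the converses in each item follow automatically.
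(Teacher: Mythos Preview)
Your argument is correct and is exactly the standard route: write down the characteristic polynomial $\lambda^2-\tr(T)\lambda+1=0$, use Vieta's relations $\lambda_1\lambda_2=1$ and $\lambda_1+\lambda_2=\tr(T)$, and classify according to the sign of the discriminant $\tr(T)^2-4$. The care you take in the hyperbolic case to rule out $|\lambda_1|=|\lambda_2|=1$ via distinctness of the roots is the right way to close that small gap.

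As for comparison with the paper: the paper does not actually supply a proof of this lemma. It is stated as a well-known trichotomy for $\SL$ and left unproved, so there is nothing to compare against beyond noting that your write-up is the standard justification one would give.
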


Several other simple properties will also be useful. Firstly, you can multiply transfer matrices to solve \eqref{eq:helmholtz_general} when two domains have been joined together. That is, if 
\begin{equation*}
\begin{pmatrix} u(L/2) \\ u'(L/2) \end{pmatrix} =
T_1(\omega) \begin{pmatrix} u(0) \\ u'(0) \end{pmatrix}
\quad\text{and}\quad
\begin{pmatrix} u(L) \\ u'(L) \end{pmatrix} =
T_2(\omega) \begin{pmatrix} u(L/2) \\ u'(L/2) \end{pmatrix},
\end{equation*}
then, clearly,
\begin{equation}
\begin{pmatrix} u(L) \\ u'(L) \end{pmatrix} =
T_2(\omega)T_1(\omega) \begin{pmatrix} u(0) \\ u'(0) \end{pmatrix}.
\end{equation}
In addition to this, since $T(\omega)$ is always invertible, we can write that
\begin{equation} \label{eq:transfer_right}
\begin{pmatrix} u(0) \\ u'(0) \end{pmatrix} =
T^{-1}(\omega) \begin{pmatrix} u(L) \\ u'(L) \end{pmatrix}.
\end{equation}
In order to be able to easily relate $T$ and $T^{-1}$, we will assume that $A$ and $B$ are each symmetric. In this case, the straightforward change of variables $x\mapsto-x$ in \eqref{eq:transfer_left} yields the following lemma, which was previously used in \cite{craster2022ssh}:
\begin{lemma}[Symmetric units] \label{lem:symmetry}
Given a function $c:[0,1]\to(0,\infty)$ which is symmetric in the sense that 
\begin{equation} \label{eq:symmetry}
    c(x) = c(1-x), \quad\text{for all } x\in[0,1],
\end{equation}
then the associated transfer matrix, as defined in \eqref{eq:transfer_left}, satisfies
\begin{equation*}
    T(\omega)^{-1}=ST(\omega)S, \quad\text{where } 
    S=\begin{pmatrix} 1 & 0 \\ 0 & -1 \end{pmatrix}.
\end{equation*}
\end{lemma}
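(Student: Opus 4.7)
The plan is to realise $T^{-1}$ as the transfer matrix of the reversed material, exploit the symmetry hypothesis to identify that reversed material with the original one, and read off the sign changes in the derivative that the reversal introduces — these sign changes are precisely the action of $S$ on the left and right.

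Concretely, I would introduce a reflected solution. If $u$ solves \eqref{eq:helmholtz_general} on $[0,1]$, set $v(y):=u(1-y)$ for $y\in[0,1]$. Then $v'(y)=-u'(1-y)$ and $v''(y)=u''(1-y)$, so
\begin{equation*}
v''(y)+\frac{\omega^2}{c(1-y)^2}v(y)=0.
\end{equation*}
The symmetry assumption \eqref{eq:symmetry} gives $c(1-y)=c(y)$, so $v$ satisfies the \emph{same} Helmholtz equation on $[0,1]$ as $u$, and therefore admits the same transfer matrix $T(\omega)$:
\begin{equation*}
\begin{pmatrix} v(1) \\ v'(1) \end{pmatrix}=T(\omega)\begin{pmatrix} v(0) \\ v'(0) \end{pmatrix}.
\end{equation*}
Substituting the definition of $v$, the left-hand side equals $(u(0),-u'(0))^{\top}=S\,(u(0),u'(0))^{\top}$ and the right-hand side equals $T(\omega)(u(1),-u'(1))^{\top}=T(\omega)S\,(u(1),u'(1))^{\top}$.

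Since $S^2=I$, rearranging yields $(u(0),u'(0))^{\top}=ST(\omega)S\,(u(1),u'(1))^{\top}$. Comparing with \eqref{eq:transfer_right} — and noting that the initial data $(u(0),u'(0))$ is arbitrary so the identity must hold at the matrix level — gives $T(\omega)^{-1}=S\,T(\omega)\,S$, as required.

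The only subtlety is bookkeeping the minus sign that the reversal $y=1-x$ produces in $v'$: this is what forces $S$ rather than the identity. Beyond that, every step is a direct substitution, so I do not expect any real obstacle. (One could alternatively argue via \Cref{lem:det1} and the explicit formula \eqref{eq:Texplicit}, writing $T^{-1}$ with the adjugate and then checking the identity entry-by-entry using the solutions $\psi_{1,\omega}$ and $\psi_{2,\omega}$ of the reversed problem, but the change-of-variables approach above is shorter and makes the geometric meaning of $S$ transparent.)
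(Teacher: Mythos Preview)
Your proof is correct and follows essentially the same approach as the paper, which merely remarks that the result follows from ``the straightforward change of variables $x\mapsto-x$'' in the defining relation for the transfer matrix. Your reflection $y=1-x$ is exactly this change of variables on $[0,1]$, and you have carried out the bookkeeping of the sign on $v'$ carefully and correctly.
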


These properties mean that for reflected recursive media, as defined above, if we know the transfer matrices $T_A$ and $T_B$ corresponding to the starting media $A$ and $B$, then we are able to describe fully how waves propagate through the medium in either direction. The assumption that $A$ and $B$ are symmetric could be relaxed, however the resulting formulas would be less concise. In the examples presented in Sections~\ref{sec:fibonacci} and~\ref{sec:periodic}, we will take $A$ and $B$ to be homogeneous materials, which is already sufficient to reveal spectra with many rich and exotic properties.

\subsection{Main spectral results}

In general, we will use the notation $T_X$ to denote the transfer matrix associated to the medium with label $X$. In particular, given a sequence of recursive media $\{M_N:N\in\N\}$ we can define a corresponding sequence of transfer matrices $\{T_{M_N}:N\in\N\}$ to be such that 
\begin{equation} \label{defn:TN}
\begin{pmatrix} u(N) \\ u'(N) \end{pmatrix} =
T_{M_N}(\omega) \begin{pmatrix} u(0) \\ u'(0) \end{pmatrix},
\end{equation}
and we have that if $M_N=M_{N-1}X_1^NX_2^N\cdots X_{m(N)}^N$ then $T_{M_N}= T_{X_{m(N)}^N}\cdots T_{X_2^N}T_{X_1^N}T_{M_{N-1}}$.

\begin{lemma}[Symmetric propagation] \label{lem:minusN}
Given a sequence of recursive media $\{M_N:N\in\N\}$ which is based on intial media that are symmetric in the sense of \eqref{eq:symmetry}, it holds that
\begin{equation*}
\begin{pmatrix} u(-N) \\ u'(-N) \end{pmatrix} =
ST_{M_N}(\omega) S\begin{pmatrix} u(0) \\ u'(0) \end{pmatrix}, \quad\text{for all }N\in\N.
\end{equation*}
\end{lemma}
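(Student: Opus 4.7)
The strategy is to identify the reflected medium on $[-N,0]$ with a reversed concatenation of the same unit cells that make up $M_N$, and then to use \Cref{lem:symmetry} to rewrite the reversed product of transfer matrices in terms of $T_{M_N}$ itself.

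First, I would decompose the label $M_N$ into its constituent unit labels $X_1X_2\cdots X_n$ (with each $X_i\in\{A,B\}$), so that by the multiplicative property of transfer matrices $T_{M_N}=T_{X_n}T_{X_{n-1}}\cdots T_{X_1}$. Because $\c_{M_\infty}(x)=c_{M_\infty}(-x)$ for $x<0$, the medium on $[-N,0]$ is the mirror image of $M_N$; and since each unit $A,B$ is individually symmetric in the sense of \eqref{eq:symmetry}, each reflected unit coincides with its unreflected counterpart. Hence, reading from left to right, the reflected medium on $[-N,0]$ is precisely the concatenation $X_nX_{n-1}\cdots X_1$.

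Next, I would compute the transfer matrix propagating from $x=-N$ to $x=0$ in the reflected medium, which is $T_{X_1}T_{X_2}\cdots T_{X_n}$. Inverting this gives the matrix that sends the data at $x=0$ to the data at $x=-N$, namely
\begin{equation*}
    T_{X_n}^{-1}T_{X_{n-1}}^{-1}\cdots T_{X_1}^{-1}.
\end{equation*}
Applying \Cref{lem:symmetry} to each factor gives $T_{X_k}^{-1}=ST_{X_k}S$; using $S^2=I$ to cancel the internal pairs of $S$ factors, this product telescopes to $ST_{X_n}T_{X_{n-1}}\cdots T_{X_1}S=ST_{M_N}S$, which is the claim.

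The only subtle step is the structural identification above, namely that reflecting $M_N$ produces the unit sequence read in reverse order. This is precisely where the hypothesis that $A$ and $B$ are symmetric enters: without it, each reflected unit would carry a transfer matrix distinct from its unreflected counterpart, and \Cref{lem:symmetry} could not be invoked unit-by-unit. Once this identification is in place, the remainder of the argument is a routine telescoping of $S$ factors and an application of the inverse-of-a-product rule.
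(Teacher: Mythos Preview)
Your argument is correct and is essentially the same as the paper's: decompose $M_N$ into unit labels, recognise that propagating from $0$ to $-N$ in the reflected medium amounts to applying $T_{X_n}^{-1}\cdots T_{X_1}^{-1}$, and then use \Cref{lem:symmetry} together with $S^2=I$ to telescope this into $ST_{M_N}S$. If anything, you are slightly more explicit than the paper about why the reflected medium on $[-N,0]$ is the reversed concatenation $X_nX_{n-1}\cdots X_1$, which the paper takes for granted.
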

\begin{proof}
Let $N\in\N$ and suppose that we have some sequence of labels $X_1^N,X_2^N,\dots,X_M^N\in\{A,B\}$, with $M=M(N)=\sum_{n=1}^N m(n)$, such that $M_N=X_1^NX_2^N\dots X_M^N$. Then, it holds that $T_{M_N}=T_{X_M^N}\dots T_{X_2^N}T_{X_1^N}$. Conversely, we have that
\begin{align*}
    \begin{pmatrix} u(-N) \\ u'(-N) \end{pmatrix} = T_{X_M^N}^{-1}\dots T_{X_2^N}^{-1}T_{X_1^N}^{-1}
\begin{pmatrix} u(0) \\ u'(0) \end{pmatrix}
    =ST_{X_M^N}\dots T_{X_2^N}T_{X_1^N}S
\begin{pmatrix} u(0) \\ u'(0) \end{pmatrix}
    =ST_{M_N}S
\begin{pmatrix} u(0) \\ u'(0) \end{pmatrix},
\end{align*}
where we have used \Cref{lem:symmetry} to infer that $T_{X_n^N}^{-1}=ST_{X_n^N}S$.
\end{proof}

\begin{theorem} \label{thm:spectrum}
Suppose that the function $\c_{M_\infty}$ describes a reflected recursive medium which is based on initial media that are symmetric in the sense of \eqref{eq:symmetry} and let $(\omega,u)\in\R\times L^2(\R)$ be a solution to the problem
\begin{equation*} 
\ddt{u}{x}+ \frac{\omega^2}{\c_{M_\infty}^2(x)}u=0.
\end{equation*}
Assuming that $\omega\in\R$ is such that
\begin{equation} \tag{I} \label{cond1}
    \min\sigma(T_{M_N}(\omega))\to0, \quad\text{as }N\to\infty,
\end{equation}
then $u(x)\to0$ and $u'(x)\to0$ as $x\to\pm\infty$ with $u\not\equiv 0$ if and only if the normalised eigenvector associated to the smallest eigenvalue of $T_{M_N}(\omega)$, denoted by $(v_{1,N} \ v_{2,N})^\top$, satisfies
\begin{equation} \tag{II} \label{cond2}
    \text{either} \quad v_{1,N}=o\left(\frac{1}{\max\sigma(T_{M_N}(\omega))}\right) \quad  \text{or} \quad v_{2,N}=o\left(\frac{1}{\max\sigma(T_{M_N}(\omega))}\right) \quad\text{as }N\to\infty.
\end{equation}
\end{theorem}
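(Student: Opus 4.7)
My plan is to combine the $\SL$ structure of the transfer matrices with the reflective symmetry of $\c_{M_\infty}$ to reduce two-sided decay of $u$ to an algebraic condition on how the stable eigendirection of $T_{M_N}(\omega)$ aligns with the coordinate axes as $N\to\infty$. Under \eqref{cond1}, \Cref{lem:det1,lem:eigenvalues} force $T_{M_N}(\omega)$ to be hyperbolic for all sufficiently large $N$, with real reciprocal eigenvalues $\mu_N$ and $\mu_N^{-1}$ satisfying $\min\sigma(T_{M_N})=|\mu_N|=1/\max\sigma(T_{M_N})\to 0$.

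The first substantive step is to translate decay at $\pm\infty$ into linear algebra on the initial data $(a,b)^\top:=(u(0),u'(0))^\top$. Decay at $+\infty$ is equivalent to $T_{M_N}(a,b)^\top\to 0$ along the integers (at fixed $\omega$, the single-step transfer matrices are uniformly bounded, so continuous and discrete decay agree). By \Cref{lem:minusN} and invertibility of $S$, decay at $-\infty$ is equivalent to $T_{M_N}(a,-b)^\top\to 0$. Summing and differencing these, two-sided decay forces $T_{M_N}(a,0)^\top\to 0$ \emph{and} $T_{M_N}(0,b)^\top\to 0$ simultaneously. But these are scalar multiples of the two columns of $T_{M_N}$, and $\det T_{M_N}=1$ (\Cref{lem:det1}) forbids both columns from vanishing. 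Hence a non-trivial decaying $u$ must be either even (with $b=0$ and $T_{M_N}e_1\to 0$) or odd (with $a=0$ and $T_{M_N}e_2\to 0$), where $e_j$ denotes the $j$-th standard basis vector.

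It then remains to identify when $T_{M_N}e_j\to 0$. Diagonalising $T_{M_N}=P_N\,\mathrm{diag}(\mu_N,\mu_N^{-1})\,P_N^{-1}$ with $P_N=(v^{(s)}_N\mid v^{(u)}_N)$ and expanding $e_j=\alpha_N^{(j)}v^{(s)}_N+\beta_N^{(j)}v^{(u)}_N$ in the eigenbasis gives
\[
T_{M_N}e_j=\mu_N\alpha_N^{(j)}v^{(s)}_N+\mu_N^{-1}\beta_N^{(j)}v^{(u)}_N,
\]
in which the first term is $O(\mu_N)$ and vanishes automatically, while the second forces $\beta_N^{(j)}=o(\mu_N)=o(1/\max\sigma(T_{M_N}))$. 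Cramer's rule then yields $\beta_N^{(1)}=v_{2,N}/\det P_N$ and $\beta_N^{(2)}=-v_{1,N}/\det P_N$, which (once $\det P_N$ is controlled) collapses the even and odd cases into exactly the two alternatives \eqref{cond2}.

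The main technical obstacle I foresee is the factor $\det P_N=v_{1,N}w_{2,N}-v_{2,N}w_{1,N}$ appearing in the Cramer expressions, which could in principle degenerate if the stable and unstable eigendirections become nearly parallel as $N\to\infty$. To close this rigorously I would bypass the eigenbasis and argue entry-wise: writing $T_{M_N}=\begin{pmatrix}p_N & q_N \\ r_N & s_N\end{pmatrix}$, the eigenequations for $v^{(s)}_N$ together with $\tr T_{M_N}=\mu_N+\mu_N^{-1}$ and $\det T_{M_N}=1$ pin down $q_N$, $r_N$ and $s_N$ in terms of $p_N$, $\mu_N$ and $(v_{1,N},v_{2,N})$, from which the asymptotic behaviour of each column of $T_{M_N}$ can be read off directly and matched against \eqref{cond2} without ever invoking the unstable eigenvector.
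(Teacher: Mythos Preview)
Your proof is essentially the paper's argument, reorganised: both rely on hyperbolicity from \eqref{cond1}, \Cref{lem:minusN} to handle $x<0$, diagonalisation of $T_{M_N}$, and summing/differencing the $\pm N$ conditions to decouple $u(0)$ from $u'(0)$. Your one genuine departure is the parity step: rather than reading off even/odd from the spectral expansion as the paper does, you observe that if both $a,b\neq 0$ then both columns of $T_{M_N}$ would vanish in the limit, contradicting $\det T_{M_N}=1$ --- a slightly cleaner route to the same conclusion that avoids one layer of computation. You are also right to flag $\det P_N$: the paper's diagonalisation formula silently omits the $1/\det P_N$ factor (equivalently, it assumes the unit eigenvectors satisfy $v_{1,N}w_{2,N}-v_{2,N}w_{1,N}=1$), so the possible collapse of the eigenbasis you worry about is a gap the paper shares rather than one it resolves; your proposed entry-wise workaround, or an explicit lower bound on $|\det P_N|$, would in fact tighten the published argument.
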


\begin{proof}
In light of \eqref{defn:TN} and \Cref{lem:minusN}, the property that $u(x)\to0$ and $u'(x)\to0$ as $x\to\pm\infty$ is equivalent to $u(N)\to0$ and $u'(N)\to0$ as $N\to\pm\infty$. Notice also that the values of $u(0)$ and $u'(0)$ fully determine $u(x)$, for a given $\omega$. It is also important to realise that \eqref{cond1} implies that $T_{M_N}$ is hyperbolic for large $N$, as it has an eigenvalue which decays to zero. This means, firstly, that $\max\sigma(T_{M_N}(\omega))\to\infty$ as $N\to\infty$, such that \eqref{cond2} is a decay condition on either $v_{1,N}$ or $v_{2,N}$. Secondly, it means that, without loss of generality, we can assume that $T_{M_N}(\omega)$ is hyperbolic for all $N$.

Suppose first that \eqref{cond1} and \eqref{cond2} hold. Thus, using hyperbolicity, we can write that 
\begin{equation} \label{eq:iter}
\begin{pmatrix} u(N) \\ u'(N) \end{pmatrix} =
\begin{pmatrix} v_{1,N} & w_{1,N} \\ v_{2,N} & w_{2,N} \end{pmatrix}
\begin{pmatrix} \lambda_{1,N} & 0 \\ 0 & \lambda_{2,N} \end{pmatrix}
\begin{pmatrix} w_{2,N} & -w_{1,N} \\ -v_{2,N} & v_{1,N} \end{pmatrix}
\begin{pmatrix} u(0) \\ u'(0) \end{pmatrix},
\end{equation}
where $|\lambda_{1,N}|<1$ and $\lambda_{2,N}=\lambda_{1,N}^{-1}$ are the eigenvalues of $T_{M_N}(\omega)$ with associated normalised eigenvectors $(v_{1,N}, v_{2,N})^\top$ and $(w_{1,N}, w_{2,N})^\top$. Some algebra gives
\begin{equation} \label{eq:iter_plus}
\begin{pmatrix} u(N) \\ u'(N) \end{pmatrix} =
\lambda_{1,N}
\begin{pmatrix} v_{1,N} \\ v_{2,N} \end{pmatrix}
\begin{pmatrix} w_{2,N} & -w_{1,N} \end{pmatrix}
\begin{pmatrix} u(0) \\ u'(0) \end{pmatrix}
+
\lambda_{2,N}
\begin{pmatrix} w_{1,N} \\ w_{2,N} \end{pmatrix}
\begin{pmatrix} -v_{2,N} & v_{1,N} \end{pmatrix}
\begin{pmatrix} u(0) \\ u'(0) \end{pmatrix}.
\end{equation}
We can use \Cref{lem:minusN} to derive a similar formula for $-N$:
\begin{equation} \label{eq:iter_neg}
\begin{pmatrix} u(-N) \\ -u'(-N) \end{pmatrix} =
\lambda_{1,N}
\begin{pmatrix} v_{1,N} \\ v_{2,N} \end{pmatrix}
\begin{pmatrix} w_{2,N} & -w_{1,N} \end{pmatrix}
\begin{pmatrix} u(0) \\ -u'(0) \end{pmatrix}
+
\lambda_{2,N}
\begin{pmatrix} w_{1,N} \\ w_{2,N} \end{pmatrix}
\begin{pmatrix} -v_{2,N} & v_{1,N} \end{pmatrix}
\begin{pmatrix} u(0) \\ -u'(0) \end{pmatrix}.
\end{equation}
Assumption \eqref{cond1} means that $\lambda_{1,N}\to0$ as $N\to\infty$ so first term in both \eqref{eq:iter_plus} and \eqref{eq:iter_neg} vanishes for large $N$. We can take care of the second terms thanks to \eqref{cond2}, since choosing either $u(0)=0$ (if $v_{1,N}\to0$) or $u'(0)=0$ (if $v_{2,N}\to0$) means these terms will also vanish as $N\to\infty$. Therefore, $u(x)\to0$ and $u'(x)\to0$ as $x\to\pm\infty$.

Conversely, suppose that \eqref{cond1} holds and that $u(x)\to0$ and $u'(x)\to0$ as $x\to\pm\infty$. Once again, we can use hyperbolicity and use \eqref{eq:iter_plus} and \eqref{eq:iter_neg} to see that, thanks to \eqref{cond1},
\begin{equation} \label{eq:iter_}
\lim_{N\to\infty}
\lambda_{2,N}
\begin{pmatrix} -v_{2,N} & v_{1,N} \end{pmatrix}
\begin{pmatrix} u(0) \\ u'(0) \end{pmatrix}=0
\quad\text{and}\quad
\lim_{N\to\infty}
\lambda_{2,N}
\begin{pmatrix} -v_{2,N} & v_{1,N} \end{pmatrix}
\begin{pmatrix} u(0) \\ -u'(0) \end{pmatrix}=0.
\end{equation}
Adding and subtracting these gives that
\begin{equation}
    \left(\lim_{N\to\infty}\lambda_{2,N} \, v_{2,N}\right) u(0)=0
    \quad\text{and}\quad
    \left(\lim_{N\to\infty}\lambda_{2,N} \, v_{1,N}\right) u'(0)=0.
\end{equation}
Since $(v_{1,N} \ v_{2,N})^\top$ is a unit vector (for all finite $N$) and \eqref{cond1} implies that $\lambda_{2,N}\to\infty$, we must have that \eqref{cond2} holds in order for $u(0)$ and $u'(0)$ to not be both zero.
\end{proof}

\Cref{thm:spectrum} is the main result of this work, the implications of which we will explore in the following sections.  We will refer to the condition \eqref{cond1} as the \textit{spectral gap} condition as it describes which frequencies are able to propagate arbitrarily far through the medium: those that are not are said to be in a spectral gap of the structure. The condition \eqref{cond2} will be referred to as the \textit{edge mode} condition. While \eqref{cond1} describes the required behaviour of the eigenvalues of the sequence of transfer matrices, \eqref{cond2} is the required condition for the solution to be in the eigenspace corresponding to the decaying eigenvalue. The analogue of this result for periodic structures was proved in Theorem~2.4 of \cite{craster2022ssh}.

The main challenge with understanding the conditions \eqref{cond1} and \eqref{cond2} is dealing with the fact that, typically, $\|T_{M_N}\|\to\infty$ as $N\to\infty$. This property is well known in the setting of random media \cite{furstenberg1963noncommuting}. We can employ some tricks to make this easier to handle, both numerically and for making analytic statements. For the edge mode condition \eqref{cond2} we can normalise the transfer matrices $T_{M_N}$ to have unit norm, since the normalised eigenvectors are unchanged. For the spectral gap condition \eqref{cond1} it will be more convenient to study the trace of the transfer matrices, instead of their minimum eigenvalue. Using \Cref{lem:eigenvalues}, and the fact that the trace of a matrix is the sum of its eigenvalues, we have the following result:

\begin{lemma}[Transfer matrix traces] \label{lem:trace}
Given a sequence of transfer matrices $\{T_N\in\R^{2\times2}:N\in\N\}$, it holds that $\min\sigma(T_{N})\to0$ as $N\to\infty$ if and only if $|\tr(T_N)|\to\infty$ as $N\to\infty$.
\end{lemma}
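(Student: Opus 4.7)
The plan is to exploit the two facts established earlier: every transfer matrix lies in $\SL$ (\Cref{lem:det1}), and the partition of $\SL$ in \Cref{lem:eigenvalues} ties the size of the trace to which of the three geometric regimes $T_N$ sits in. Writing $\lambda_{1,N},\lambda_{2,N}$ for the eigenvalues of $T_N$, the relation $\lambda_{1,N}\lambda_{2,N}=\det T_N = 1$ immediately tells me that in the elliptic and parabolic cases both eigenvalues have modulus $1$, so $\min\sigma(T_N)=1$ and $|\tr T_N|\leq 2$. Hence each of the two hypotheses $\min\sigma(T_N)\to 0$ and $|\tr T_N|\to\infty$ forces $T_N$ to be hyperbolic for all sufficiently large $N$, and I may discard the first finitely many matrices and assume hyperbolicity throughout.

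Having reduced to the hyperbolic regime, I would note that the eigenvalues are real and, since their product is $+1$, they share a sign. Therefore
\begin{equation*}
    |\tr T_N| = |\lambda_{1,N}+\lambda_{2,N}| = |\lambda_{1,N}|+|\lambda_{2,N}|.
\end{equation*}
Setting $\mu_N := \min\sigma(T_N) \in (0,1)$ (interpreting $\min\sigma$ as the smaller eigenvalue modulus, which is the reading consistent with \Cref{lem:eigenvalues}), the reciprocal relation $|\lambda_{2,N}|=\mu_N^{-1}$ gives the single identity
\begin{equation*}
    |\tr T_N| = \mu_N + \mu_N^{-1}.
\end{equation*}

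Both directions of the biconditional now reduce to an elementary statement about the scalar function $f(\mu)=\mu+\mu^{-1}$ on $(0,1]$: it is continuous, strictly decreasing, and maps $(0,1]$ bijectively onto $[2,\infty)$, with $f(\mu)\to\infty$ precisely as $\mu\to 0^+$. Thus $\mu_N\to 0$ if and only if $f(\mu_N)=|\tr T_N|\to\infty$, completing the argument.

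There is no real obstacle here: the only point that requires any care is the preliminary reduction to the hyperbolic case, since $\min\sigma(T_N)$ has to be interpreted in a way that is consistent across all three regimes of \Cref{lem:eigenvalues}. Once the determinant-one identity is used to collapse the two real eigenvalues into a single variable $\mu_N$, the equivalence is just the asymptotic behaviour of $\mu+\mu^{-1}$ near $0$.
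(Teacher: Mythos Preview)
Your proposal is correct and matches the approach the paper indicates: the paper does not actually write out a proof but only remarks that the result follows from \Cref{lem:eigenvalues} together with the fact that the trace is the sum of the eigenvalues. Your reduction to the hyperbolic regime and the identity $|\tr T_N|=\mu_N+\mu_N^{-1}$ are precisely the details that hint leaves implicit.
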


This is useful as it means the spectral gap condition \eqref{cond1} can be understood by dealing with a scalar quantity which, in the case of the Fibonacci medium studied in the next section, will satisfy a simple recursion relation.

\begin{remark}[Defects]
The convenient form of the edge mode condition \eqref{cond2} is due to the choice of reflectional symmetry for the interface. If other interfaces or defects were chosen, this would need to be modified accordingly. For example, the approach developed here could be combined with the methods of \cite{craster2022ssh} to model localised modes in recursive media with a defect (with finite size) introduced. This defect could, for example, be a patch of local periodicity within the otherwise quasicrystalline structure; a material of this type was proposed by \cite{zhou2019topological} (shown in Fig.~3). Introducing dislocations to periodic structures is a well-known approach for creating edge modes that are topologically protected and which have tunable properties \cite{ammari2020robust, drouot2020defect}.
\end{remark}

\begin{remark}[Bound states in the continuum]
This work does not say anything about the existence of bound states in the continuum. By this we mean states which are localised (so $u(x)\to0$ and $u'(x)\to0$ as $x\to\pm\infty$) but for which neither of the eigenvalues of $T_{M_N}(\omega)$ converges to zero as $N\to\infty$ (so there exists $c>0$ such that $\min\sigma(T_{M_N})\geq c$ for all $N$). These have been studied in a range of different periodic and finite-dimensional systems, where they typically arise due to interactions between two nearby resonances with different quality factors \cite{hsu2016bound, ammari2021boundstates, SchnitzerPorter}.
\end{remark}

\section{Application to a Fibonacci quasicrystal} \label{sec:fibonacci}

\subsection{Definition and basic properties}

The main example that we will consider in this work is the Fibonacci medium that was defined in \eqref{defn:Fibonacci} and is given by the sequence of recursive media $\{F_N:N\in\N\}$ defined by $F_1=A$,  $F_2=AB$ and $F_N=F_{N-1}F_{N-2}$ for $N\geq3$. This sequence is closely related to the Fibonacci sequence. In particular, the number of $A$s in $F_N$ is equal to the $N$th Fibonacci number, the number of $B$s in $F_N$ is given by the $(N-1)$th Fibonacci number and the total number of labels in $F_N$ is given by the $(N+1)$th Fibonacci number. 

For simplicity, we will consider the specific case where $A$ and $B$ are both homogeneous materials, such that $c(x)=1$ in $A$ and $c(x)=r$ in $B$, with $r$ assumed to be positive and not equal to 1. These materials satisfy the required assumptions of symmetry and their associated transfer matrices have convenient expressions, given by
\begin{equation} \label{eq:TransferMatrices}
    T_A=\begin{pmatrix} \cos(\omega) & \frac{1}{\omega} \sin(\omega) \\ -\omega \sin(\omega) & \cos(\omega) \end{pmatrix}
    \qquad\text{and}\qquad
    T_B=\begin{pmatrix} \cos(r\omega) & \frac{1}{r\omega} \sin(r\omega) \\ -r\omega \sin(r\omega) & \cos(r\omega) \end{pmatrix}.
\end{equation}

\subsection{Spectral gap condition}

The spectral properties of the Fibonacci material $F_N$, for $N\in\N$, have been studied \emph{e.g.} by \cite{kohmoto1983localization, kohmoto1987critical, hassouani2006surface, guenneau2008acoustic, gei2010wave, gei2018waves, gei2020phononic}. These studies typically concern the case where some finite $N$ is fixed and then $F_N$ is used as the repeating unit cell in a periodic material. They show several important properties, such as its self similarity \cite{gei2018waves}, the Cantor-like emergence of nested spectral gaps \cite{gei2018waves} and the positions of the largest spectral gaps \cite{gei2010wave, hassouani2006surface}. A similar result was obtained by \cite{hamilton2021effective}, where the positions of the main spectral gaps was predicted using a simple homogenisation approach that counts the number of $A$s and $B$s in each $F_N$.

Given \Cref{lem:trace}, we can recast the spectral gap condition \eqref{cond1} from \Cref{thm:spectrum} in terms of the trace of the transfer matrices. This is helpful since we can derive a recursion relation for the trace of $T_{F_N}$, a result first shown by \cite{kohmoto1983localization} and repeated here for completeness:

\begin{lemma}[Recursion relation] \label{lem:recursion}
Let $x_N=\tr(T_{F_N})$ where $F_N$ is the Fibonacci medium defined in \eqref{defn:Fibonacci}. Then, it holds that
\begin{equation*}
    x_{N+1} = x_Nx_{N-1} - x_{N-2}, \quad\text{for } N\geq3.
\end{equation*}
\end{lemma}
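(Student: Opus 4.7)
The plan is to exploit a classical trace identity for $\SL$. For any matrices $M,N\in\SL$, Cayley--Hamilton gives $N^{-1}=\tr(N)I-N$, and multiplying on the left by $M$ and taking the trace yields
\begin{equation*}
    \tr(MN)+\tr(MN^{-1})=\tr(M)\tr(N).
\end{equation*}
By \Cref{lem:det1}, every transfer matrix lies in $\SL$, so this identity applies directly to the sequence $T_{F_N}$.

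The first step is to translate the media recursion $F_{N+1}=F_NF_{N-1}$ into an identity for transfer matrices. Following the ordering convention spelled out just before \eqref{defn:TN} (the leftmost label in a word corresponds to the rightmost matrix in the product), this gives $T_{F_{N+1}}=T_{F_{N-1}}T_{F_N}$ and, analogously, $T_{F_N}=T_{F_{N-2}}T_{F_{N-1}}$. Applying the trace identity with $M=T_{F_{N-1}}$ and $N=T_{F_N}$ then produces
\begin{equation*}
    x_{N+1}=\tr\bigl(T_{F_{N-1}}T_{F_N}\bigr)=x_{N-1}x_N-\tr\bigl(T_{F_{N-1}}T_{F_N}^{-1}\bigr).
\end{equation*}

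The crucial simplification is the telescoping cancellation
\begin{equation*}
    T_{F_{N-1}}T_{F_N}^{-1}=T_{F_{N-1}}T_{F_{N-1}}^{-1}T_{F_{N-2}}^{-1}=T_{F_{N-2}}^{-1},
\end{equation*}
which relies only on the factorisation of $T_{F_N}$ above. Since $T_{F_{N-2}}\in\SL$, Cayley--Hamilton once more yields $\tr(T_{F_{N-2}}^{-1})=\tr(T_{F_{N-2}})=x_{N-2}$, and the claimed recursion follows. The hypothesis $N\geq 3$ is exactly what is needed so that $F_{N-2}$ and all the factorisations above are defined. The only potential pitfall in executing this plan is keeping the reversed ordering straight between label concatenations and matrix products; once that bookkeeping is fixed, the argument reduces to a direct application of the $\SL$ trace identity.
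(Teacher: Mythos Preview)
Your proof is correct and follows essentially the same route as the paper: both arguments use the $\SL$ trace identity $\tr(MN)+\tr(MN^{-1})=\tr(M)\tr(N)$, the factorisations $T_{F_{N+1}}=T_{F_{N-1}}T_{F_N}$ and $T_{F_N}=T_{F_{N-2}}T_{F_{N-1}}$, and the cancellation $T_{F_{N-1}}T_{F_N}^{-1}=T_{F_{N-2}}^{-1}$. The only cosmetic difference is that you derive the trace identity explicitly from Cayley--Hamilton, whereas the paper simply quotes it.
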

\begin{proof}
We have that $T_{F_N}=T_{F_{N-2}}T_{F_{N-1}}$ for $N\geq3$, which can be rearranged to give $T_{F_{N-2}}^{-1}=T_{F_{N-1}}T_{F_N}^{-1}$. Similarly, $T_{F_{N+1}}=T_{F_{N-1}}T_{F_{N}}$. Adding these two equations gives
\begin{equation}
    T_{F_{N+1}}+T_{F_{N-2}}^{-1}=T_{F_{N-1}}T_{F_{N}}+T_{F_{N-1}}T_{F_N}^{-1}.
\end{equation}
Taking the trace and using the fact that $\tr(T_{F_{N-1}}T_{F_{N}})+\tr(T_{F_{N-1}}T_{F_N}^{-1}) =\tr(T_{F_{N-1}})\tr(T_{F_{N}})$ gives the result. 
\end{proof}

\Cref{lem:trace} tells us that a frequency satisfies the spectral gap condition \eqref{cond1} if the sequence of traces $x_N$ diverges. Two examples of these sequences $x_N$ are shown in \Cref{fig:examplesequence}, one of which remains bounded (its absolute value is always less than 2, at least for the first 100 terms computed) and one which diverges quickly. Understanding the difference between these two cases is challenging. The system appears to exhibit chaotic behaviour, in the sense that a small change in the starting values of the sequence can cause the system to switch between the two cases. This behaviour is often observed in non-linear recursion relations. 

Studying \Cref{fig:examplesequence} might lead to the hypothesis that if $|x_n|>2$ for some $n$, then $x_N$ will diverge as $N\to\infty$. This is compounded by noticing that many sequences behave like the one shown in \Cref{fig:examplesequence2}(a), which is bounded in absolute value by 2 but then diverges quickly as soon as it records a value with magnitude greater than 2. Furthermore, similar results hold for other second-order non-linear recursion relations. For example, it is known that if the recursion relation that defines the Mandelbrot set ever exceeds $\pm2$, then it will diverge. In this case, however, no such result is possible and it turns out that there are periodic orbits which take arbitrarily large values. An example of such an orbit is shown in \Cref{fig:examplesequence2}(b), which is periodic with period 6 and can be chosen such that the multiple of 3 terms take arbitrarily large values. This periodic orbit is used to prove the following lemma, which makes this notion precise.

\begin{figure}
    \centering
    \begin{subfigure}{0.45\linewidth}
    \includegraphics[width=\linewidth]{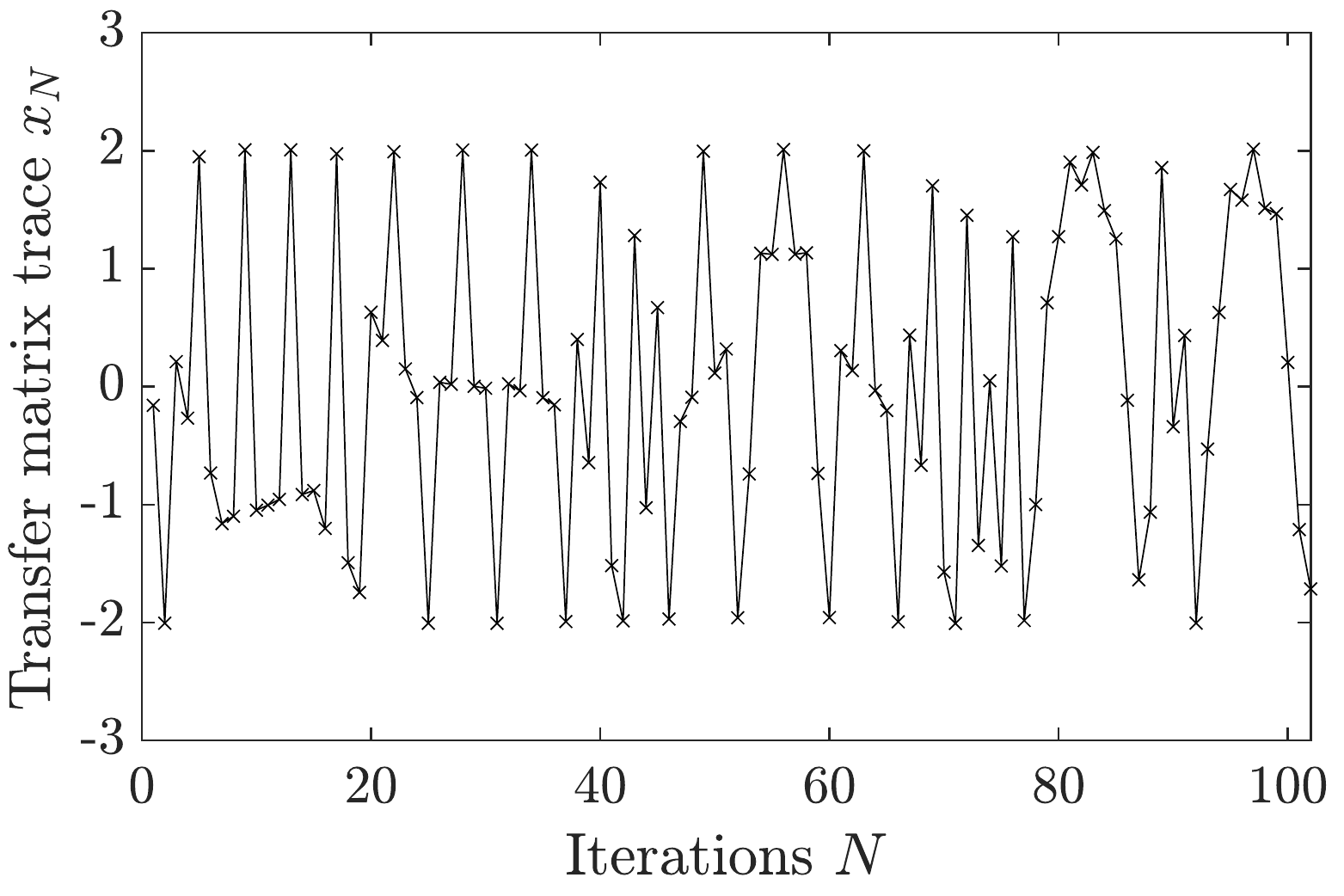}
    \caption{$r=0.92$, $\omega=1.65$.}
    \end{subfigure}
    \hspace{0.1cm}
    \begin{subfigure}{0.45\linewidth}
    \includegraphics[width=\linewidth]{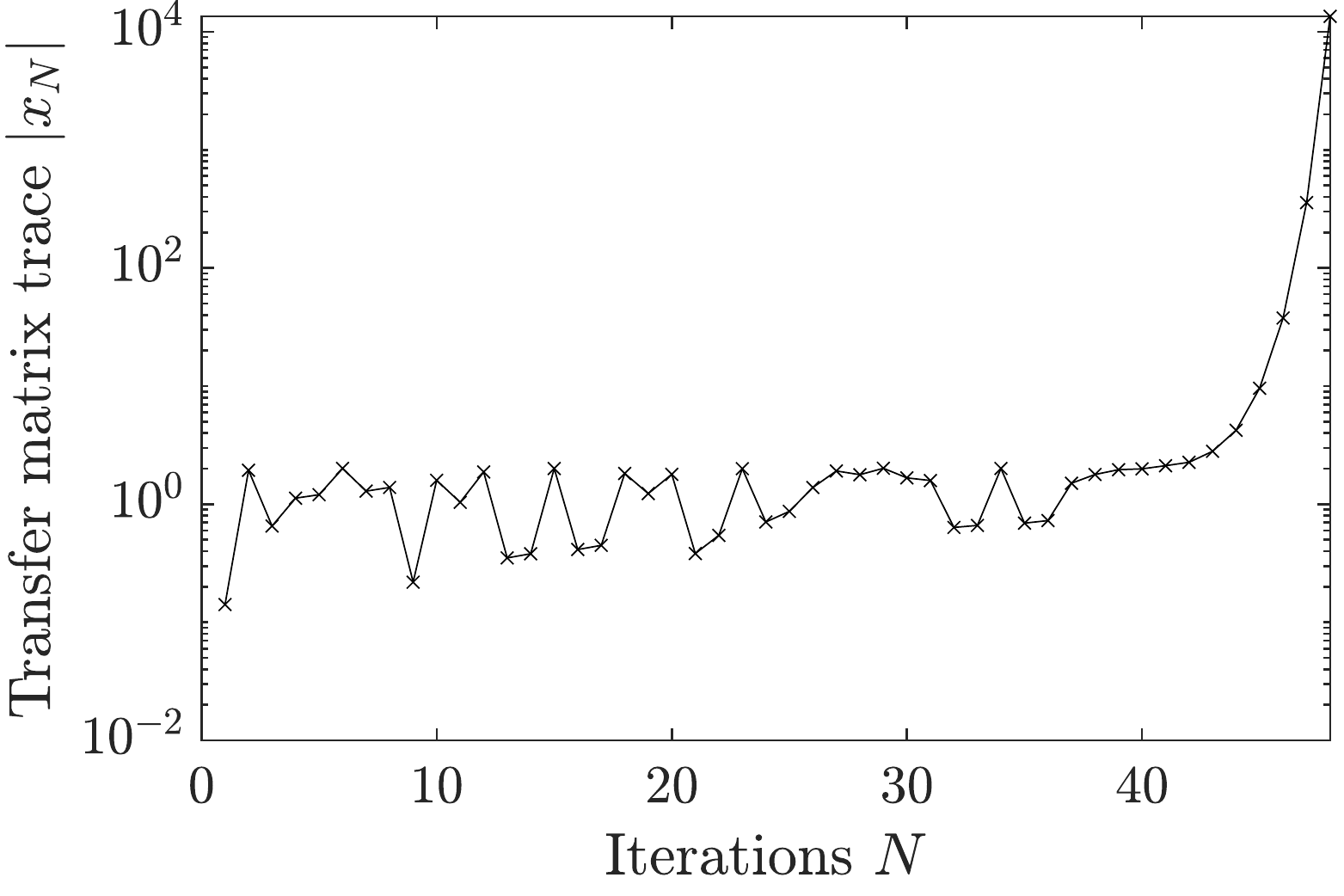}
    \caption{$r=0.92$, $\omega=1.5$.}
    \end{subfigure}
    \caption{The boundedness or divergence of the sequence of transfer matrix traces determines if a frequency is in a spectral gap. For the quasicrystalline Fibonacci medium, (a) shows a frequency for which the sequence is bounded, at least for the first 100 iterations, whereas (b) shows a frequency for which the sequence diverges quickly.}
    \label{fig:examplesequence}
\end{figure}

\begin{lemma}[Arbitrarily large bounded sequences] \label{lem:unboundedsequences}
There exist parameter values $(\omega,r)\in(0,\infty)^2$ such that the solutions to the recursion relation from \Cref{lem:recursion} can be arbitrarily large but the spectral gap condition \eqref{cond1} is not satisfied. That is, given any $K\in(0,\infty)$, we can find $(\omega,r)\in(0,\infty)^2$ such that the corresponding sequence of traces, as defined in \Cref{lem:recursion}, satisfies $K<\max_{N\in\N}|x_N|<\infty$.
\end{lemma}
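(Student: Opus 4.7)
The plan is to exhibit an explicit period-$6$ orbit of the Kohmoto recursion from \Cref{lem:recursion} whose amplitude can be driven to infinity, and then to tune $(\omega,r)$ so that the Fibonacci triple $(x_1,x_2,x_3)=(\tr T_A,\tr T_{AB},\tr T_{ABA})$ lies on it. A direct three-step application of $x_{N+1}=x_Nx_{N-1}-x_{N-2}$ shows that initial data of the form $(0,-t,0)$ generate the periodic sequence
\begin{equation*}
    0,\,-t,\,0,\,0,\,t,\,0,\,0,\,-t,\,0,\,0,\,t,\,0,\,\dots,
\end{equation*}
of period $6$, satisfying $\max_{N\in\N}|x_N|=|t|<\infty$. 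This will be the target orbit.

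To realise such initial data I would use the explicit matrices in \eqref{eq:TransferMatrices} together with $T_{AB}=T_BT_A$ and $T_{ABA}=T_AT_BT_A$. A short calculation gives
\begin{equation*}
    x_1 = 2\cos\omega, \qquad x_2 = 2\cos\omega\cos(r\omega)-\left(r+\tfrac{1}{r}\right)\sin\omega\sin(r\omega),
\end{equation*}
so $x_1=0$ forces $\omega=(2j+1)\pi/2$, at which point $T_A$ becomes anti-diagonal. Substituting this into the product $T_AT_BT_A$ reduces the condition $x_3=0$ to $\cos(r\omega)=0$, equivalently $r=(2k+1)/(2j+1)$ for some $k\in\N$. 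Under these two conditions one finds $|x_2|=r+1/r$, so taking $j=0$ and $k\geq 1$ yields valid parameters $(\omega,r)=(\pi/2,\,2k+1)\in(0,\infty)^2$ with $r\neq 1$. For any prescribed $K>0$ we then choose $k$ large enough that $|t|=r+1/r=(2k+1)+1/(2k+1)>K$; the trace sequence is exactly the periodic orbit above, hence bounded with $\max_N|x_N|=|t|>K$, and \Cref{lem:trace} gives that condition \eqref{cond1} fails.

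The main obstacle is identifying a period-$6$ orbit compatible with the constraint $|x_1|=|2\cos\omega|\le 2$. The most natural candidate orbit $(t,0,0,-t,0,0,\dots)$ would require $x_1=t$ with $|t|$ large and so cannot be realised by any $(\omega,r)$. The key observation is to use a cyclic shift of this orbit which places the large entry in the $x_2$-slot, where it is controlled by the unbounded factor $r+1/r$: this allows the amplitude to be tuned freely while $x_1$ and $x_3$ are simultaneously forced to vanish by the orthogonality conditions on $\omega$ and $r\omega$.
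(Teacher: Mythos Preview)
Your proof is correct and follows essentially the same route as the paper: exhibit the period-$6$ orbit $0,x_2,0,0,-x_2,0,\dots$ of the Kohmoto recursion, realise the initial triple $(x_1,x_2,x_3)=(0,\pm(r+1/r),0)$ by choosing $\omega=(2j+1)\pi/2$ and $r\omega=(2k+1)\pi/2$, and then push $r$ large to make $|x_2|>K$. Your added discussion of why the large entry must sit in the $x_2$-slot (rather than $x_1$) is a nice clarification not spelled out in the paper.
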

\begin{proof}
First, notice that if $x_1=0$, $x_2\neq0$ and $x_3=0$, then the recursion relation will give a sequence that is periodic with period 6, given by $0,x_2,0,0,-x_2,0,0,x_2,0,0,-x_2,0,\dots$. If we choose $\omega$ and $r$ such that
\begin{equation}
    \omega=(2n+1)\frac{\pi}{2}\quad\text{and}\quad r=\frac{2m+1}{2n+1}, \quad \text{for } n,m\in\N,
\end{equation}
then we have that $x_1=0$, $|x_2|=r+\tfrac{1}{r}$ and $x_3=0$. Let $K\in(0,\infty)$. For any given $n$, we can choose $m$ sufficiently large such that $|x_2|>K$.
\end{proof}

\begin{figure}
    \centering
    \begin{subfigure}{0.45\linewidth}
    \includegraphics[width=\linewidth]{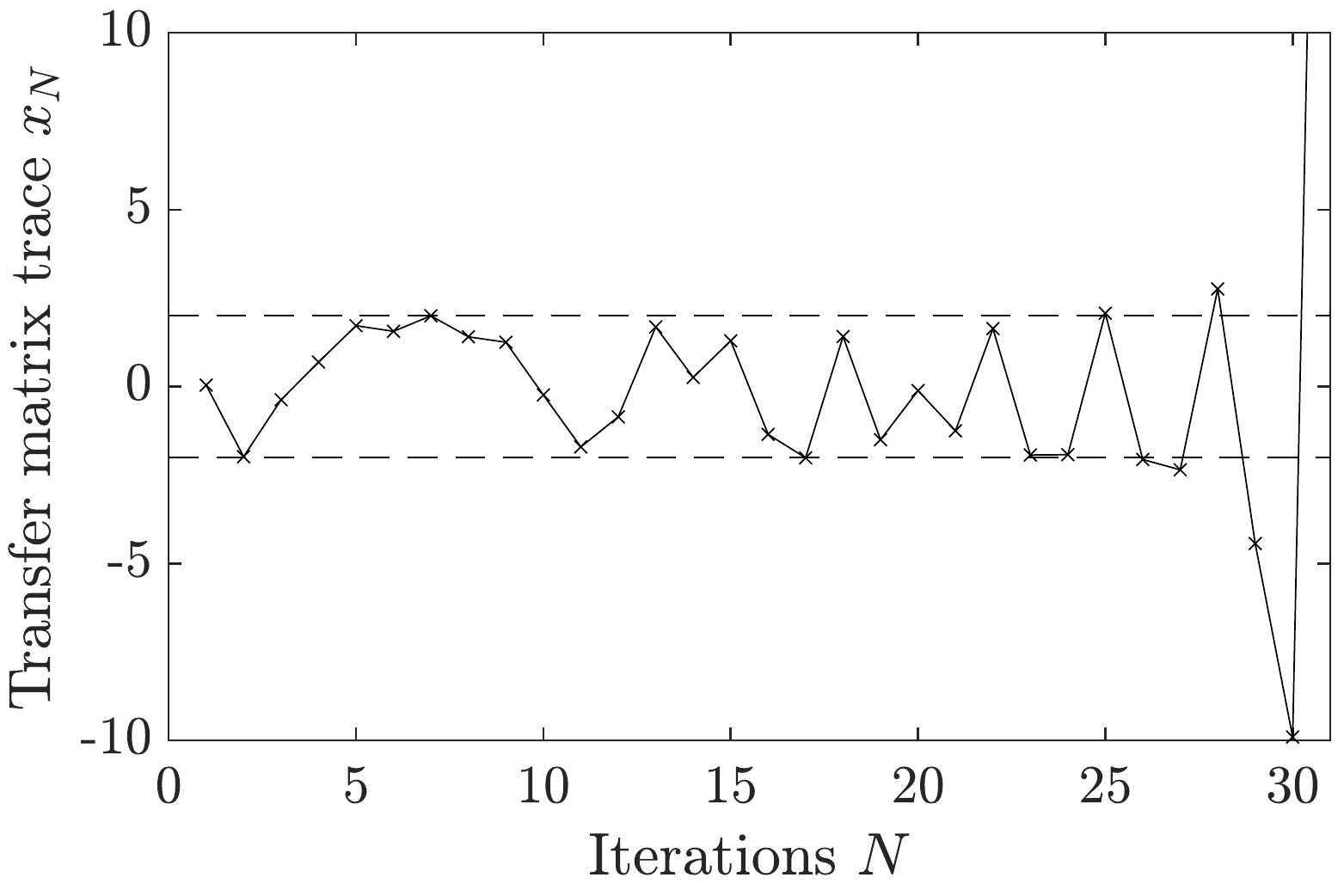}
    \caption{$r=0.92$, $\omega=1.55$.}
    \end{subfigure}
    \hspace{0.1cm}
    \begin{subfigure}{0.45\linewidth}
    \includegraphics[width=\linewidth]{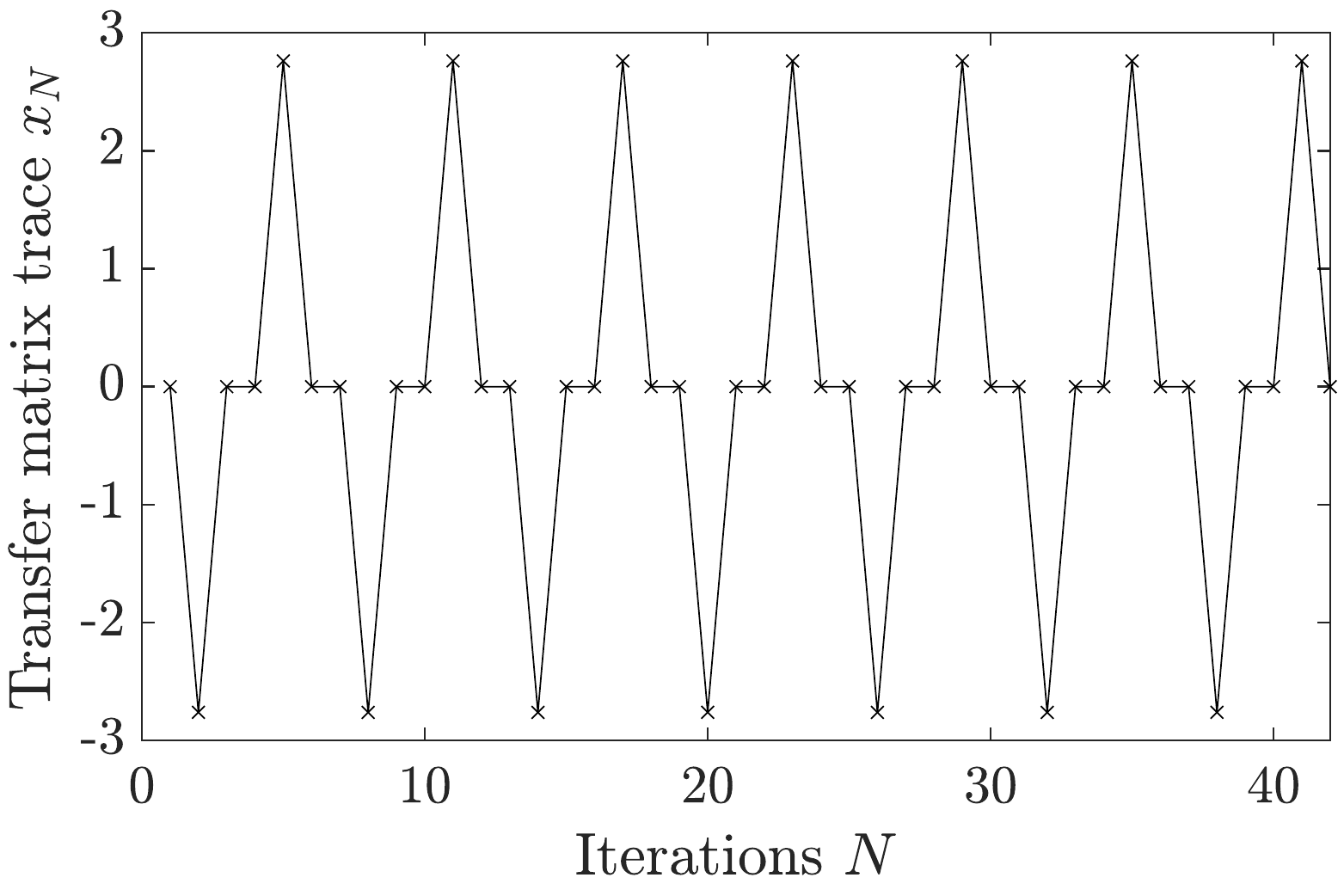}
    \caption{$r=\tfrac{7}{3}$, $\omega=\tfrac{3\pi}{2}$.}
    \end{subfigure}
    \caption{Sequences of transfer matrix traces, corresponding to the quasicrystalline Fibonacci medium, can take arbitrarily large values but be bounded. (a) Many orbits are observed to be bounded by $\pm2$ (shown in dashed lines) before diverging quickly once they escape this region. However, it is not generally true that once an orbit leaves $[-2,2]$ it will diverge. For example, the orbit in (b) is periodic with period 6 but can be chosen so that it periodically takes arbitrarily large values. In this case, every third term alternates between $x_3=-\tfrac{58}{21}\approx-2.76$ and $x_6=\tfrac{58}{21}\approx2.76$. However, increasing $r$ can cause these terms to take a value that is arbitrarily large (see \Cref{lem:unboundedsequences}).}
    \label{fig:examplesequence2}
\end{figure}

In spite of the fact that it is not possible to prove a simple bound for the maximum permitted value of $|x_N|$ in a bounded sequence, we are able to prove a similarly useful result. In particular, the following lemma says that if $x_N$ has left the interval $[-2,2]$ and starts doubling, in the sense that three successive terms are each more than double the previous term, then it will continue doing so indefinitely.

\begin{lemma}[Doubling condition] \label{lem:doubling}
The recursion relation from \Cref{lem:recursion} has the property that if there is some $n\in\N$ such that $2<|x_{n-2}|$, $2|x_{n-2}|<|x_{n-1}|$ and $2|x_{n-1}|<|x_{n}|$, then $2|x_{N}|<|x_{N+1}|$ for all $N\geq n$.
\end{lemma}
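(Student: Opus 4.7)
My plan is to prove this by strong induction on $N$, but with an invariant that is slightly stronger than just the doubling inequality. Specifically, I will show that for every $N \geq n-2$, both
\[
|x_N| > 2 \qquad \text{and} \qquad 2|x_N| < |x_{N+1}|
\]
hold simultaneously. The first clause is essential because the key estimate below will need a lower bound on $|x_{N-2}|$, and the doubling clause alone does not supply this; I get it by carrying it along through the induction.

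The base cases $N = n-2,\, n-1,\, n$ are covered by the hypotheses of the lemma together with the immediate consequence $|x_{n-1}| > 2|x_{n-2}| > 4$ and $|x_n| > 2|x_{n-1}| > 8$, all of which exceed $2$. For the inductive step, assuming the invariant for all indices in $\{n-2, \ldots, N-1\}$ with $N \geq n+1$, the bound $|x_N| > 2$ follows instantly from $|x_N| > 2|x_{N-1}| > 4$. The heart of the argument is to combine the Kohmoto recursion from \Cref{lem:recursion} with the reverse triangle inequality, giving
\[
|x_{N+1}| = |x_N x_{N-1} - x_{N-2}| \geq |x_N|\,|x_{N-1}| - |x_{N-2}|.
\]
Now I apply $|x_{N-1}| > 2|x_{N-2}|$ (from the inductive step at $N-2$) to get
\[
|x_{N+1}| > 2|x_N|\,|x_{N-2}| - |x_{N-2}| = |x_{N-2}|\bigl(2|x_N| - 1\bigr),
\]
and then use $|x_{N-2}| > 2$ (the strengthened inductive clause) to obtain $|x_{N+1}| > 2(2|x_N| - 1) = 4|x_N| - 2$. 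Since $|x_N| > 2 > 1$, this exceeds $2|x_N|$, closing the induction.

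The only delicate point is keeping both invariants alive at once: the doubling condition in the hypothesis does not directly assert $|x_{N-2}| > 2$ for $N$ beyond the base, so I cannot simply induct on the doubling inequality alone. Threading the auxiliary bound $|x_N|>2$ through the induction resolves this, and once that is in place the rest is just the reverse triangle inequality applied to the recursion and two elementary comparisons. No chaos-theoretic or spectral input is needed beyond \Cref{lem:recursion}.
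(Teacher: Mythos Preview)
Your argument is essentially the same as the paper's: both apply the reverse triangle inequality to the Kohmoto recursion, use the doubling $|x_{N-1}|>2|x_{N-2}|$ to extract a factor of $|x_{N-2}|$, and then invoke $|x_{N-2}|>2$ to finish. The paper proves one step and then says ``the result follows by induction,'' leaving the reader to check that the hypotheses regenerate; you make this explicit by carrying the auxiliary bound $|x_N|>2$ through the induction, which is a nice clarification.

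There is one small bookkeeping slip. You list $N=n$ among the base cases, but the second clause of your invariant at that index is $2|x_n|<|x_{n+1}|$, and nothing in the hypotheses gives you that---the assumed doubling only reaches $|x_n|>2|x_{n-1}|$. As written, your inductive step at $N=n+1$ then assumes the invariant at $n$, so the first genuine step of the induction is circular. The repair is trivial: take only $N=n-2$ and $N=n-1$ as base cases and begin the inductive step at $N=n$. Your inductive argument works verbatim there, since at $N=n$ it only calls on the invariant at $n-2$ and $n-1$, both of which are supplied directly by the hypotheses.
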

\begin{proof}
Let $\gamma=|x_{n-2}|$ and let $\alpha=|x_{n}|$. Then, by hypothesis, we have that $\gamma>2$ and $\alpha>4\gamma$. Subsequently, we see that
\begin{equation}
    |x_{n+1}| \geq |x_n||x_{n-1}| - |x_{n-2}| > 2\alpha\gamma -\gamma =2\alpha(\gamma-1)+2\alpha-\gamma>2\alpha,
\end{equation}
where the last inequality follows since $\gamma-1>1$ and $2\alpha-\gamma>0$. This shows that $|x_{n+1}|>2|x_n|$. For $N>n$, the result follows by induction.
\end{proof}

\Cref{lem:doubling} is the key to our numerical search for parameter values satisfying \eqref{cond1} as we know that once this condition is satisfied we can stop iterating as the sequence is guaranteed to diverge. For example, the sequence in \Cref{fig:examplesequence}(b) satisfies the doubling condition from \Cref{lem:doubling} after 44 iterations. The result of performing this search for different values of $\omega$ and $r$ is shown in \Cref{fig:Fibonacci_gaps}. \Cref{fig:Fibonacci_gaps}(a) shows spectral gaps, where \eqref{cond1} holds, in white. This is designed to be analogous to a Bloch band diagram for a periodic structure. \Cref{fig:Fibonacci_gaps}(b) is the same plot but with the spectral gaps coloured according to the number of iterations that was required for the doubling condition from \Cref{lem:doubling} to be attained.

\begin{figure}
    \centering
    \begin{subfigure}{\linewidth}
    \begin{minipage}{0.7\linewidth}
    \includegraphics[width=\linewidth]{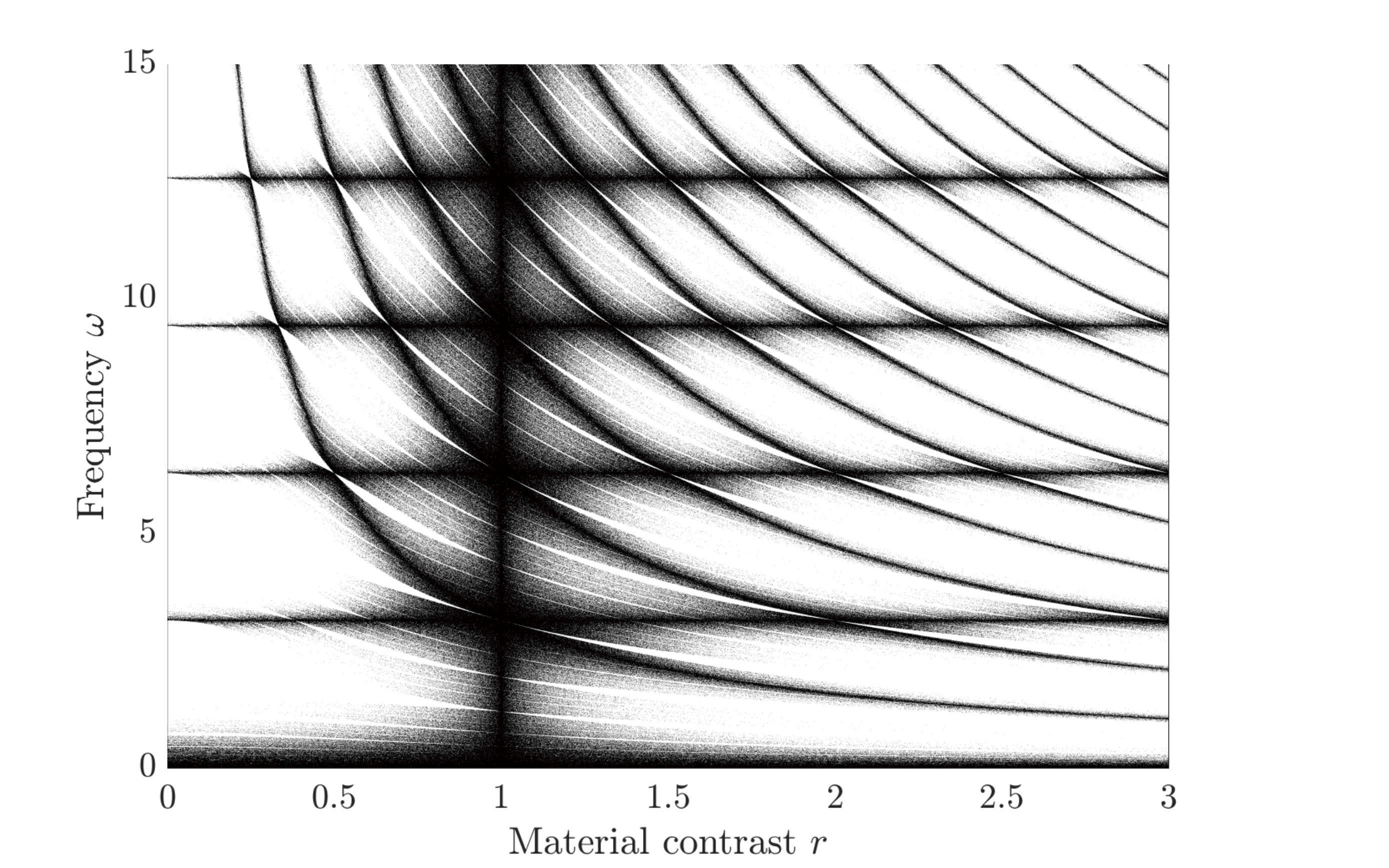}
    \end{minipage}
    \begin{minipage}{0.25\linewidth}
    \includegraphics[width=\linewidth]{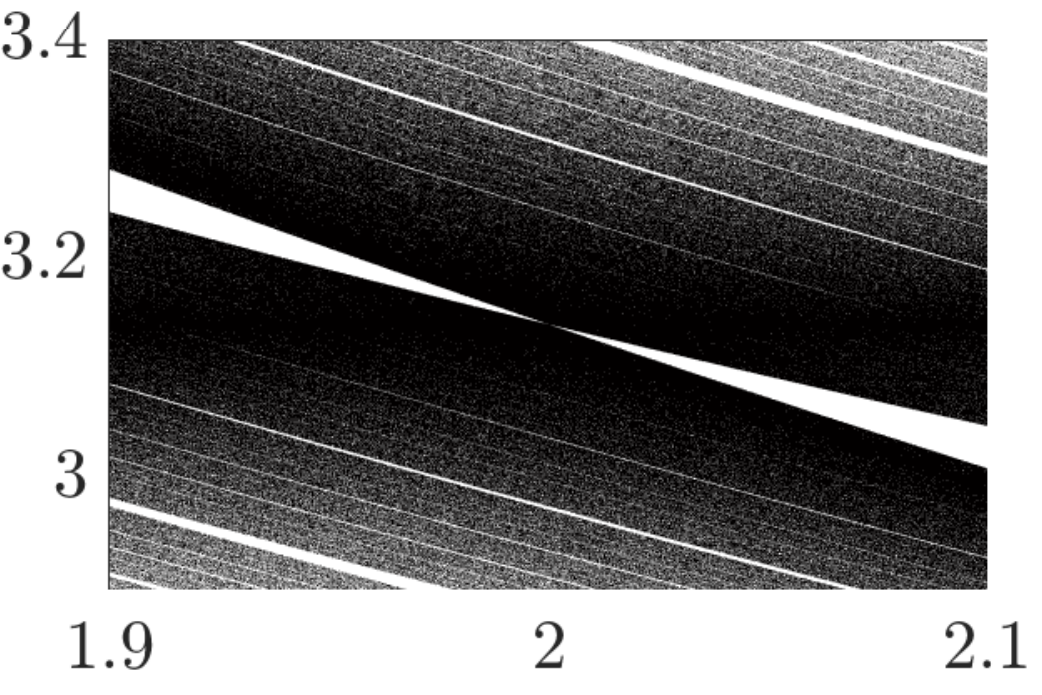}
    
    \vspace{0.4cm}
    
    \includegraphics[width=\linewidth]{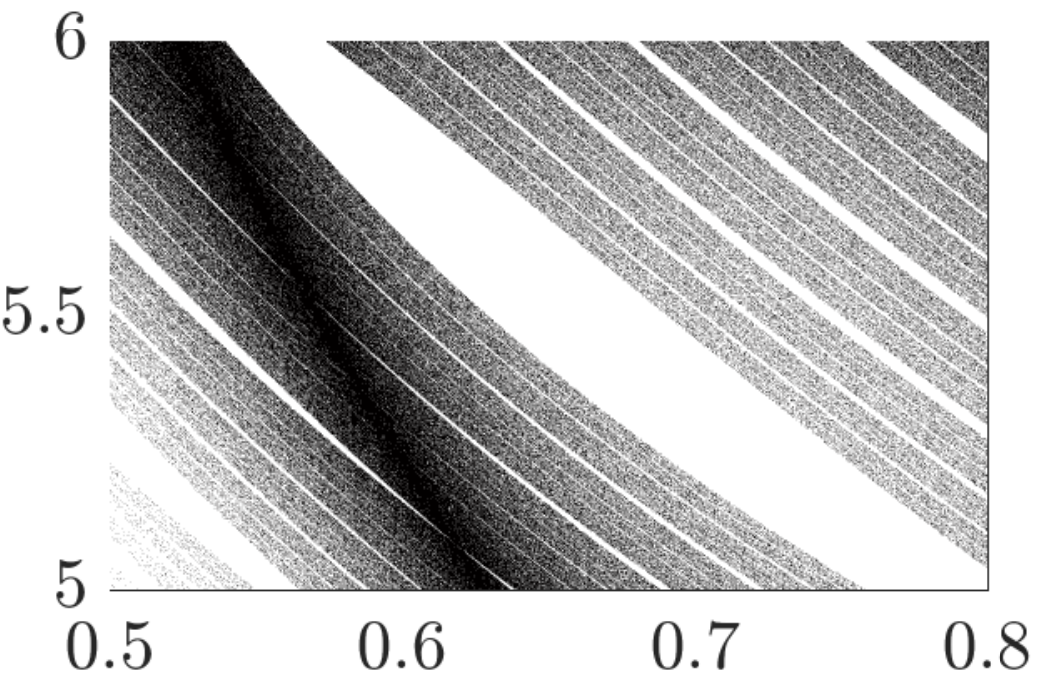}
    \end{minipage}
    \caption{The points at which the spectral gap condition \eqref{cond1} does not hold are shown in black, such that the spectral gaps appear in white. Two zoomed regions are shown in the plots on the right.}
    \end{subfigure}
    
    \vspace{0.3cm}
    
    \begin{subfigure}{\linewidth}
    \begin{minipage}{0.7\linewidth}
    \includegraphics[width=\linewidth]{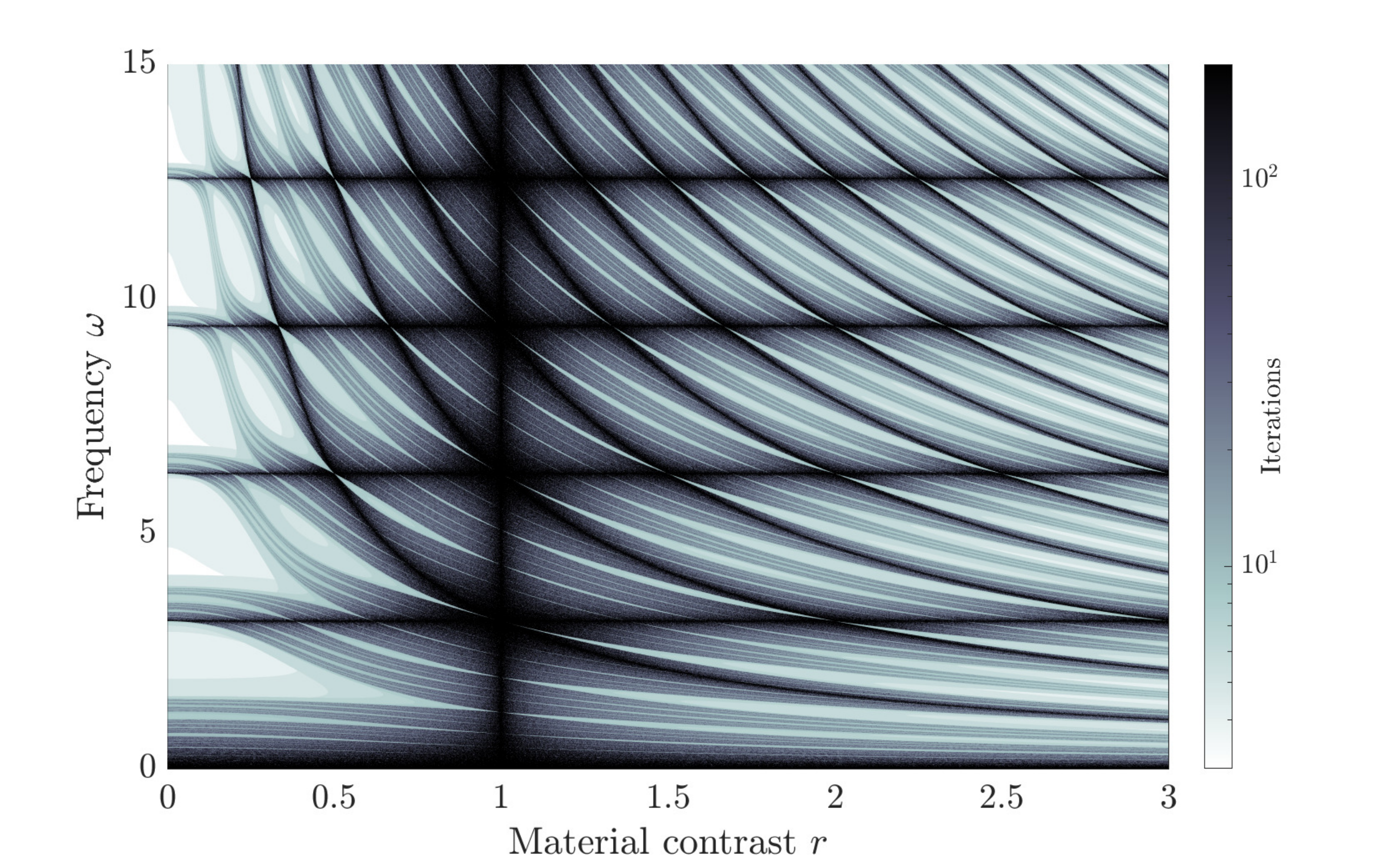}
    \end{minipage}
    \begin{minipage}{0.25\linewidth}
    \includegraphics[width=\linewidth]{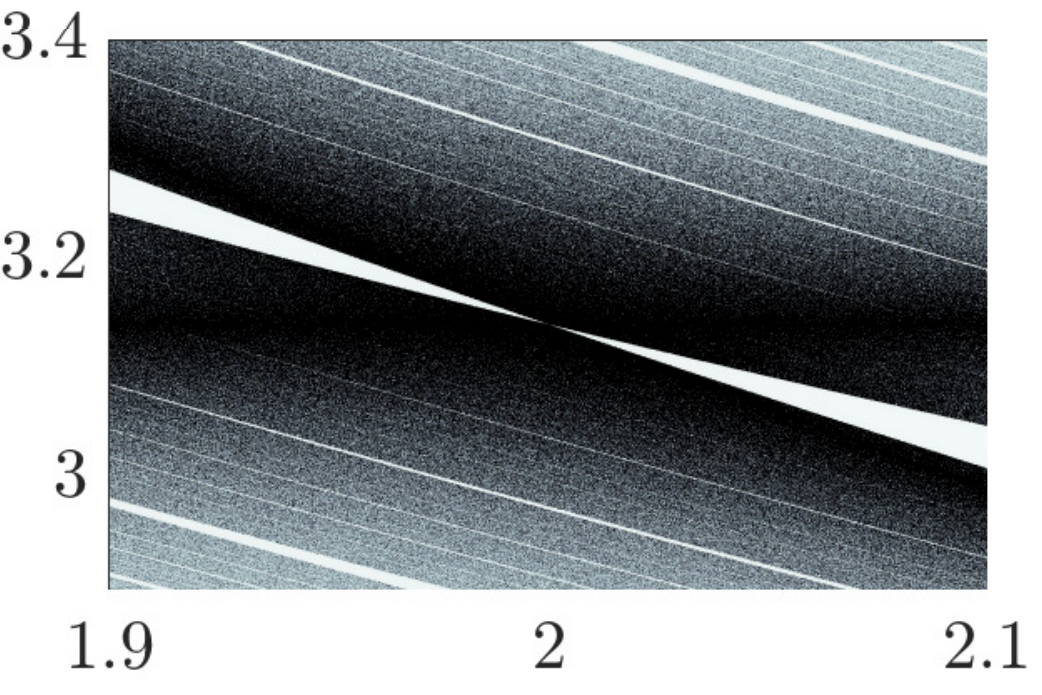}
    
    \vspace{0.4cm}
    
    \includegraphics[width=\linewidth]{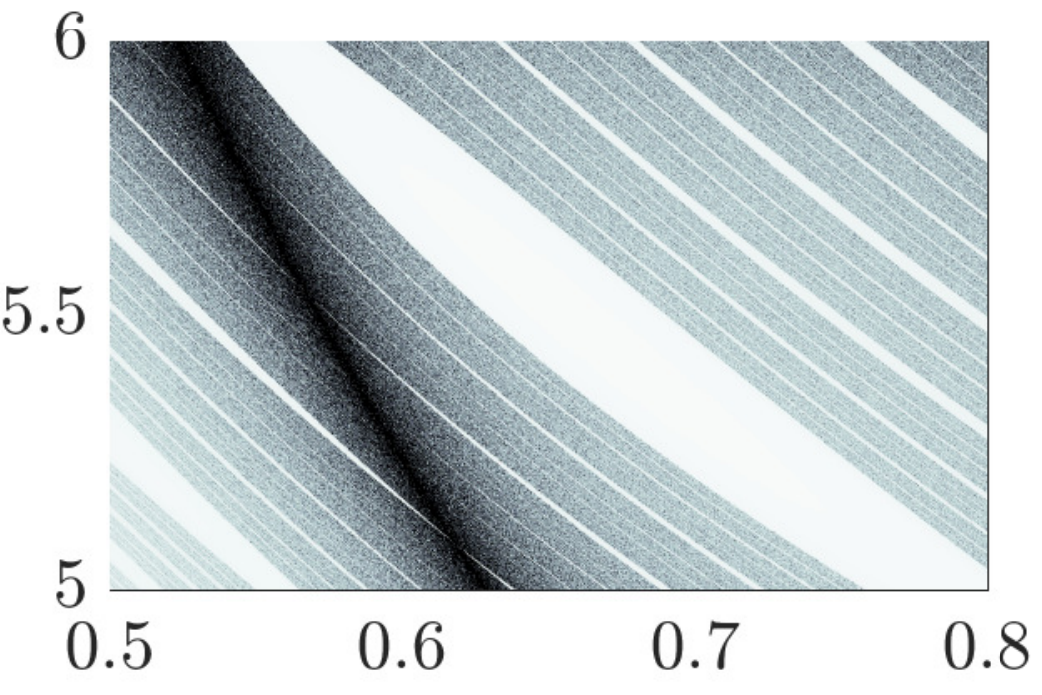}
    \end{minipage}
    
    \caption{The points at which the spectral gap condition \eqref{cond1} does not hold are shown in black and points in the spectral gaps are coloured according to the number of iterations that was required for the doubling condition from \Cref{lem:doubling} to be satisfied. Two zoomed regions are shown in the plots on the right.}
    \end{subfigure}
    \caption{The Fibonacci medium exhibits a complex structure of spectral gaps, that has self-similar and fractal properties. For each pair of material contrast $r$ and frequency $\omega$ in this plot, we check if the spectral gap condition \eqref{cond1} from \Cref{thm:spectrum} holds by iterating the sequence of traces (using \Cref{lem:recursion}) until the doubling condition from \Cref{lem:doubling} is attained.}
    \label{fig:Fibonacci_gaps}
\end{figure}

\Cref{fig:Fibonacci_gaps} shows an exotic pattern of spectral gaps. Particularly from the zoomed in plots (on the right), we can see a fractal-like pattern of gaps emerging. Some of the main features of \Cref{fig:Fibonacci_gaps} can be explained easily. First, the lack of spectral gaps when $r=1$ is due to the two materials, $A$ and $B$, being the same in this case. Additionally, there are horizontal and hyperbolic stripes where there are no spectral gaps. In particular, these are when either $\omega=(2n+1)\tfrac{\pi}{2}$ or $r\omega=(2n+1)\tfrac{\pi}{2}$, for some $n\in\N$, in which case either $T_A$ or $T_B$ is equal to the $2\times2$ identity matrix (up to a factor of $\pm1$). 

\begin{remark}[Rainbow devices]
Given the rich hierarchy of spectral gaps exhibited by the Fibonacci structure, some of which are very large (\emph{cf.}~\cite{zolla1998remarkable}), there are many other potential applications of this material, beyond the symmetry-induced waveguides considered in this work. For example, \Cref{fig:Fibonacci_gaps} shows how the spectral gaps are shifted by modulating the material contrast parameter. This property could be used to create a \emph{rainbow device} which has spatially graded material parameters and, as a result of the local spectral gap being gradually shifted, localises different frequencies at different locations. Rainbow devices based on similar graded materials have been used successfully in many different settings: see \cite{deponti2020graded, davies2021robustness} and references therein.
\end{remark}

\subsection{Edge mode condition}

The edge mode condition \eqref{cond2} is slightly easier to handle than the spectral gap condition \eqref{cond1}. In particular, we can normalise the transfer matrices, since we only need the directions of the eigenvectors, which deals with the issues posed by the fact that $\|T_{M_N}\|\to\infty$ as $N\to\infty$. Thus, given a set of values $\omega$ and $r$ which falls in a spectral gap, we can test if an edge mode is supported by iterating the sequence of transfer matrices and checking if \eqref{cond2} is checking for large iteration numbers. The results of doing this are shown in \Cref{fig:Fibonacci_edges}, where we examine two zoomed in regions of the spectrum and check values in the main spectral gaps for the edge mode condition \eqref{cond2}.

Two examples of the localised edge modes are shown in \Cref{fig:Fibonaccimodes}. These modes are computed for a finite-sized piece of the material, using finite differences. We take material contrast parameter $r=2$ and study a material with 55 pieces of material on either side of the interface (that is, $N=10$ in the Fibonacci tiling algorithm). Second-order finite difference approximations are used, with a step size $h=0.01$, and Dirichlet boundary conditions are imposed at either end of the material. It is noticeable that while the modes fluctuate a little as they decay, particularly at higher frequencies, they nonetheless have a strongly decaying envelope.

\begin{figure}
    \centering
    \begin{tikzpicture}
        \node[inner sep=0pt] at (0,0){\includegraphics[width=.48\textwidth]{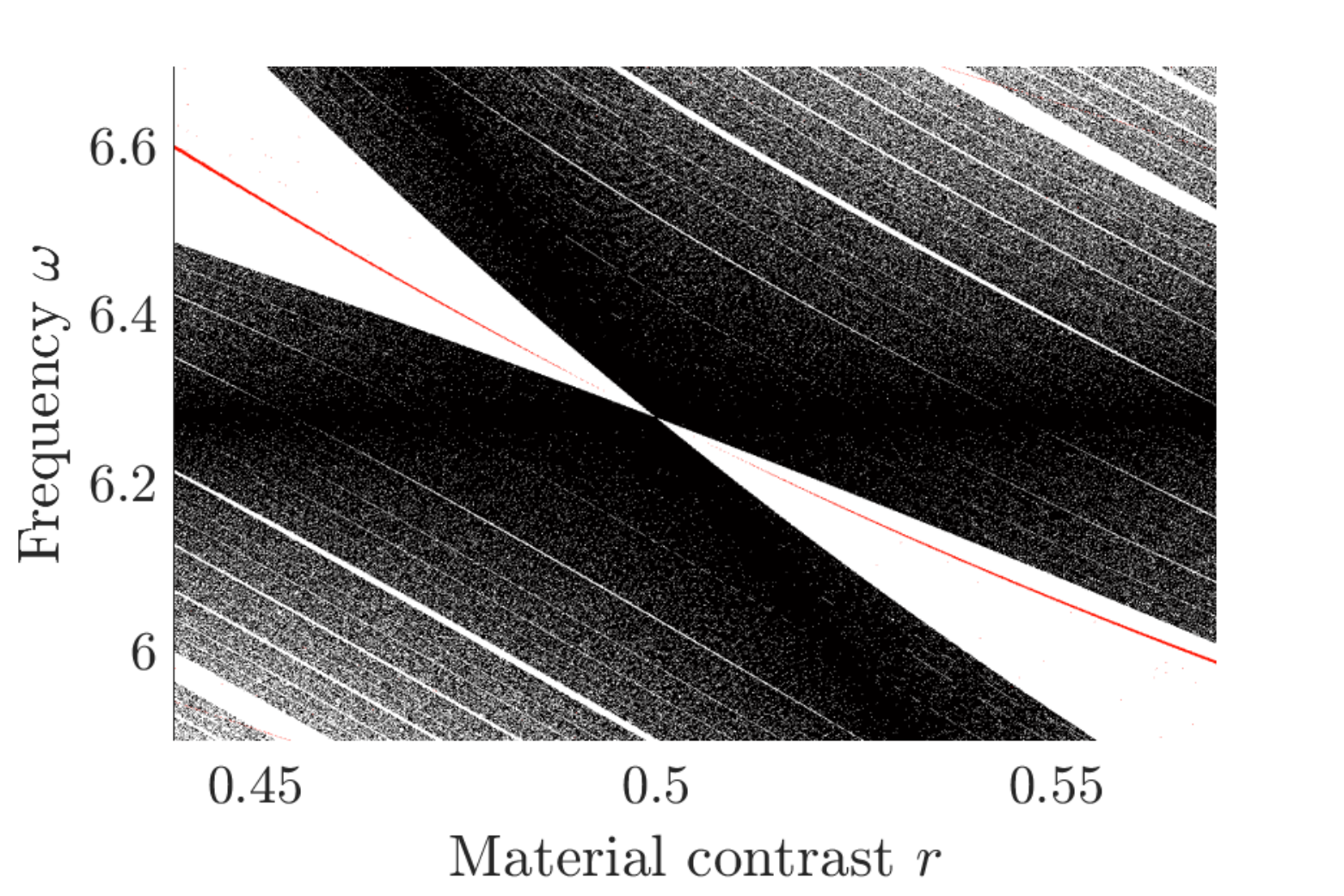}};
        \node[inner sep=0pt] at (8,0){\includegraphics[width=.48\textwidth]{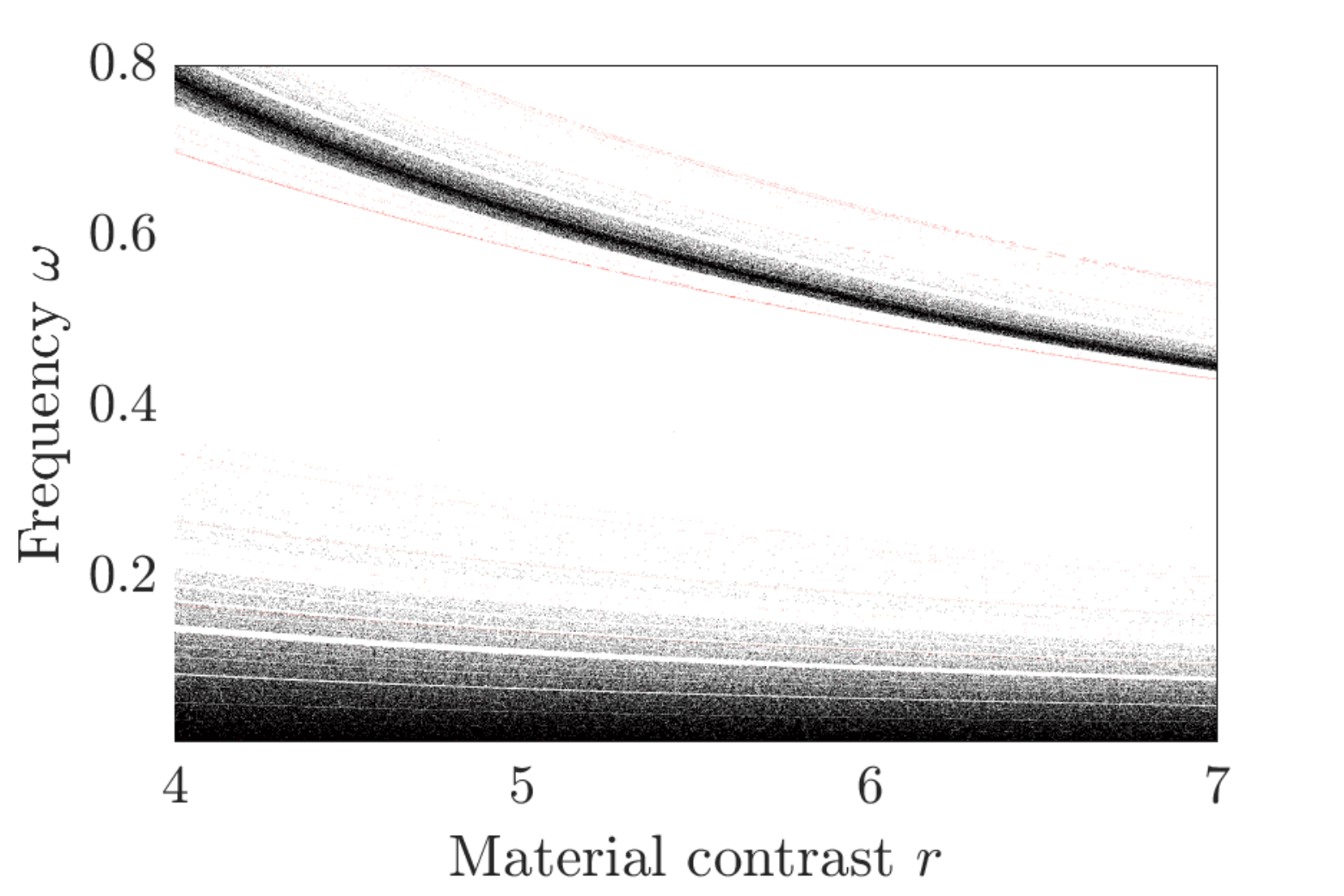}};
        \draw[->, red] (2.8,-1.5) to (2.4,-1);
        \draw[->, red] (-2.2,1.8) to (-2.4,1.5);
        \node at (3.5,-1.7) {\color{red}\small Edge mode};
        \draw[->, red] (7,0.5) to (7.5,1);
        \draw[->, red] (7,0.5) to (7.5,1.7);
        \node at (7,0.3) {\color{red}\small Edge modes};
    \end{tikzpicture}
    \caption{The reflected Fibonacci medium supports localised edge modes within its spectral gaps. We check if the edge mode condition \eqref{cond2} from \Cref{thm:spectrum} holds for different material contrasts $r$ and frequencies $\omega$ lying within spectral gaps, as determined by \eqref{cond1}. Spectral gaps are shown in white, as in \Cref{fig:Fibonacci_gaps}, and the red lines denote the locations of predicted localised edge modes. Two different zoomed in regions of the spectrum are shown.}
    \label{fig:Fibonacci_edges}
\end{figure}

\begin{figure}
    \centering
    \begin{subfigure}{0.45\linewidth}
    \includegraphics[width=\linewidth]{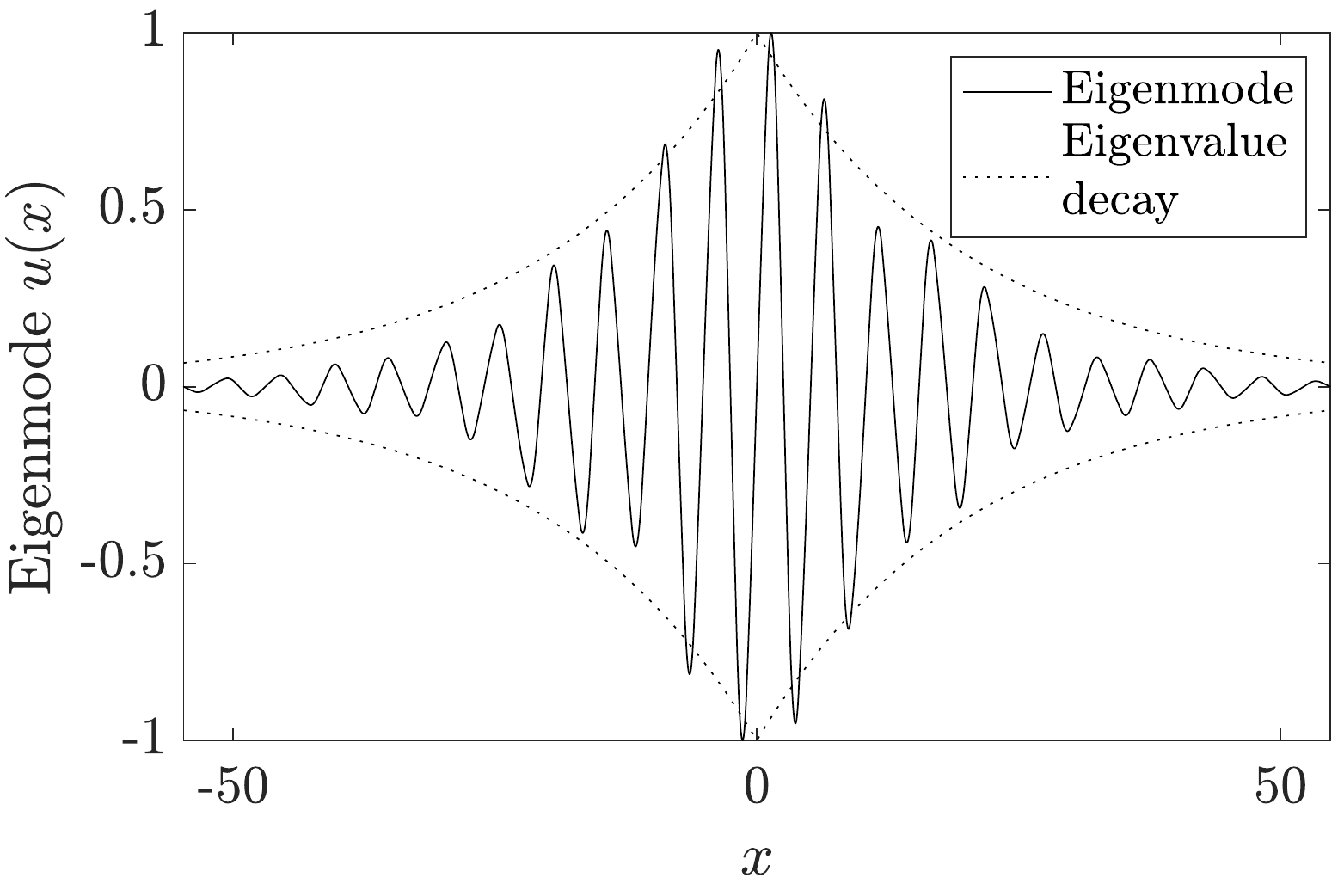}
    \caption{$\omega=1.416$.}
    \end{subfigure}
    \hspace{0.1cm}
    \begin{subfigure}{0.45\linewidth}
    \includegraphics[width=\linewidth]{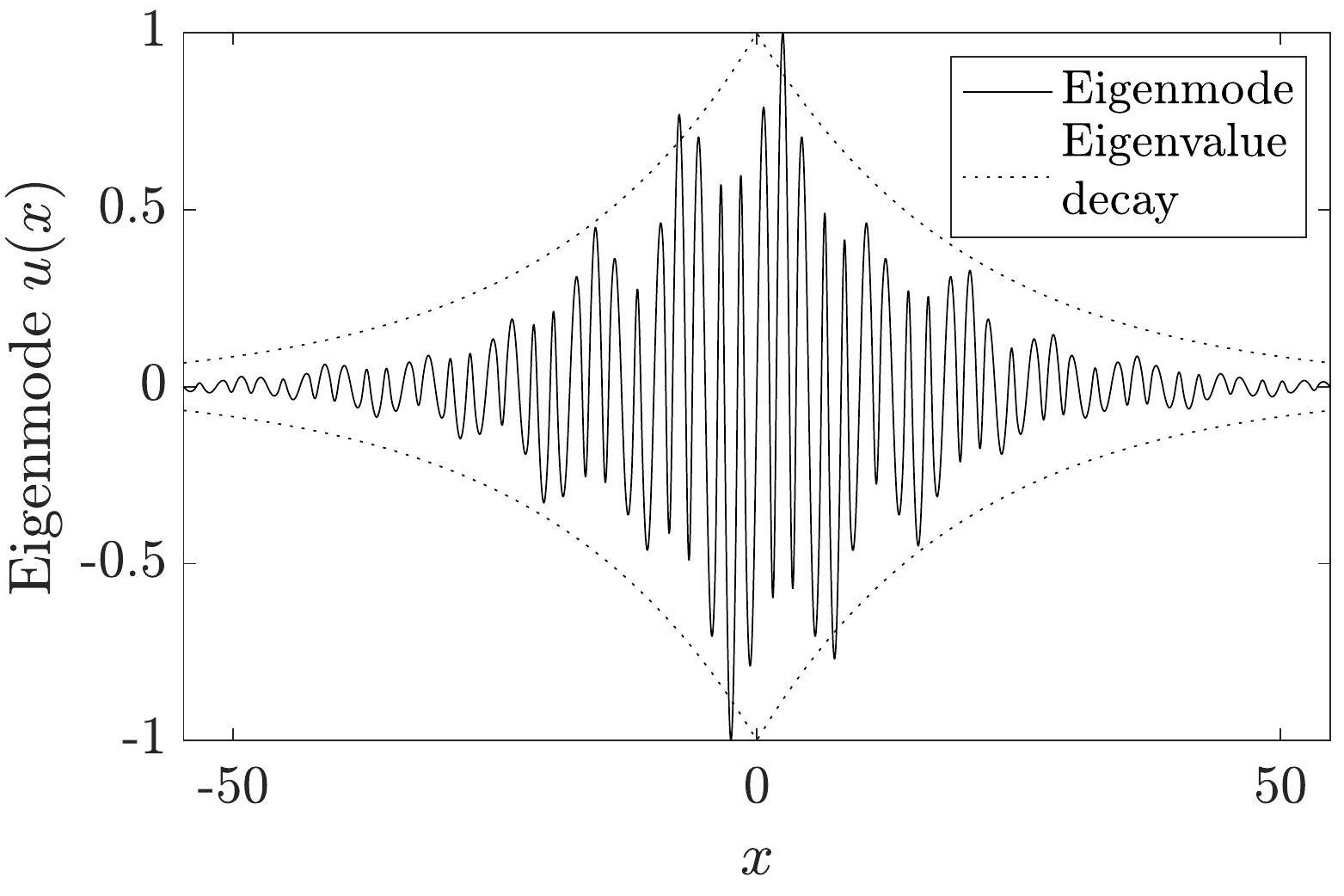}
    \caption{$\omega=4.869$.}
    \end{subfigure}
    \caption{The localised edge modes supported by the reflected Fibonacci medium decay exponentially quickly away from the interface. We take material contrast parameter $r=2$ and study a finite material with 55 pieces of material on either side of the interface (the corresponds to $N=10$). The modes are shown alongside an approximate decay rate (the dotted line), which is approximated using transfer matrix eigenvalues and the formula \eqref{eq:envelope}.}
    \label{fig:Fibonaccimodes}
\end{figure}

In addition to predicting the frequencies at which localised edge modes will exist, it is useful to estimate the rate at which they decay away from the interface. In one-dimensional waveguides based on periodic media, this can be estimated using the eigenvalues of the transfer matrices on either side of the interface, see \cite{craster2022ssh}. In our quasicrystalline setting, an approximate bound on the rate of decay can be similarly achieved by studying the decaying sequence of transfer matrix eigenvalues. In particular, the edge mode condition \eqref{cond2} means the mode lies in the limiting eigenspace of the decaying eigenvalues. Thus, if $N$ is large, we can make the approximation
\begin{equation}
\begin{pmatrix} u(\#_{F_N}) \\ u'(\#_{F_N}) \end{pmatrix} \approx
\min\sigma(T_{F_N}(\omega)) \begin{pmatrix} u(0) \\ u'(0) \end{pmatrix},
\end{equation}
where $\#_{F_N}$ is the number of labels in $F_N$ (and, since $A$ and $B$ are assumed to have unit length, is equal to the length of $F_N$). Assuming that the decay rate is approximately constant across the length of the material, we reach the expression
\begin{equation} \label{eq:envelope}
    \begin{pmatrix} u(x) \\ u'(x) \end{pmatrix} \approx
    \exp\left( \frac{\log(\min\sigma(T_{F_N}))}{\#_{F_N}}|x| \right)
    \begin{pmatrix} u(0) \\ u'(0) \end{pmatrix},
\end{equation}
where we want $N$ to be large, in order to get a reasonable approximation. Note that since $\min\sigma(T_{F_N}))\to0$ as $N\to\infty$, $\log(\min\sigma(T_{F_N})))<0$ for large $N$ so we have an exponentially decaying mode. \Cref{fig:Fibonaccimodes} shows two localised edge modes along with the approximate envelope from \eqref{eq:envelope}. In this case, we take $N=10$ (so $\#_{F_N}=55$) and get a reasonable prediction of the overall shapes of the modes (in spite of the crude approximations that were used to derive \eqref{eq:envelope}).

\begin{remark}[Lyapunov exponents]
In the setting of random media, localisation lengths are typically quantified using Lyapunov exponents \cite{borland1963nature, carmona1982exponential, comtet2013lyapunov, scales1997lyapunov}. Using our notation, if $\#_{M_N}\in\N$ is the number of labels in $M_N$, then the (maximal) Lyapunov exponent is defined as $\gamma = \lim_{N\to\infty} {\log\|T_{M_N}\|}/{\#_{M_N}}$. The fact that $\gamma\geq0$ in general for random media follows from Furstenberg \cite{furstenberg1963noncommuting}. The quantity $1/\gamma$ can then be interpreted as a measure of the localisation length \cite{borland1963nature, carmona1982exponential, furstenberg1963noncommuting}. For the Fibonacci medium studied here, which is deterministic, the Lyapunov exponent can be computed explicitly. It is also easy to see that $0\leq\gamma\leq \varphi^{-1} \log\|T_A\|+\varphi^{-2}\log\|T_B\|$, where $\varphi=(1+\sqrt{5})/2$ is the golden ratio. However, in spite of being easy to work with, the Lyapunov exponent provides a much less tight bound on the decay of the eigenmode than the eigenvalue sequence used in \eqref{eq:envelope}. This is to be expected, since the Lyapunov exponent $\gamma$ doesn't contain any information about the reflection that induced the defect. Conversely, information about the specific choice of defect was crucial to the derivation of \eqref{eq:envelope} (and we could modify the result accordingly for different types of defects and interfaces).
\end{remark}

\section{Application to a periodic material} \label{sec:periodic}

The framework developed in \Cref{sec:general} applies to any material that can be defined by a recursive tiling rule. While quasicrystalline media are the interesting case for which the theory was developed, these methods can also be used to describe periodic media. For example, given an initial material $P_1$ (which we would prefer to be inhomogeneous, to give interesting phenomena), we define $\{P_N:N\in\N\}$ as
\begin{equation} \label{eq:periodic}
    P_{N+1}=P_NP_1.
\end{equation}
The reflected material based on this rule is depicted in \Cref{fig:Periodic_sketch}, where $P_1=AB$.

\begin{figure}
    \centering
    \includegraphics[width=\textwidth]{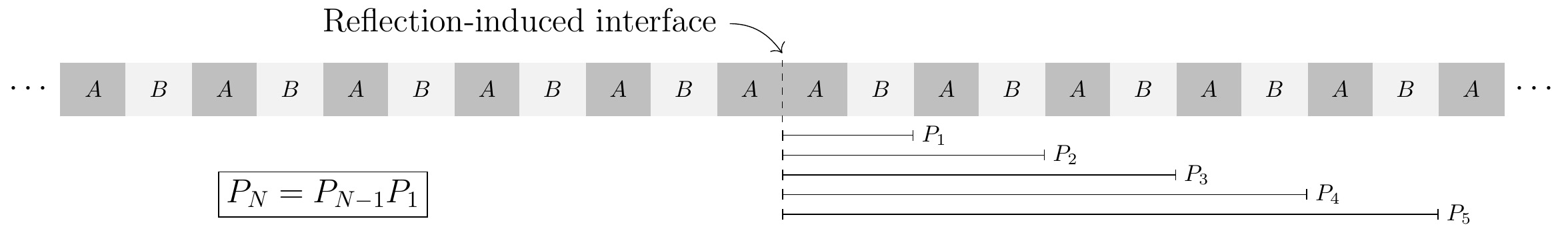}
    \caption{The theory developed in this work can also be used to model reflected periodic media, which support localised edge modes that decay away from the reflection-induced interface.}
    \label{fig:Periodic_sketch}
\end{figure}

In the case of this periodic medium, the conditions \eqref{cond1} and \eqref{cond2} from \Cref{thm:spectrum} are particularly easy to understand. In fact, they reduce to the formulas that were derived in \cite{craster2022ssh}. The periodicity means that successive transfer matrices $T_{P_N}$ are easy to characterise since $T_{P_N}=(T_{P_1})^N$. This means that the spectral gap condition \eqref{cond1}, which says that $\min\sigma(T_{P_N})\to0$, is equivalent to the condition that the minimum eigenvalue of $T_{P_1}$ has magnitude less than 1. This simple condition, leads directly to the pattern of spectral gaps plotted in \Cref{fig:SpectralGapPeriodic}. 

\Cref{fig:SpectralGapPeriodic} also shows, in the two right-hand plots, the localised edge modes that exist for certain parameter values within the spectral gaps. This is based on \eqref{cond2} from \Cref{thm:spectrum}, which is also straightforward in the case of a periodic material since the eigenvectors of $T_{P_N}=(T_{P_1})^N$ are the same as those of $T_{P_1}$. This means we simply need to check if the eigenvector associated to the smaller eigenvalue of $T_{P_1}$ is parallel to either $(1,0)^\top$ or $(0,1)^\top$. Some examples of the periodic modes are shown in \Cref{fig:Periodicmodes}, computed using a finite-difference approximation of a finite-sized piece of the material (analogous to \Cref{fig:Fibonaccimodes}). Here, upper bounds for the decay rates are shown, according to the formula
\begin{equation}
    |u(x)|\leq \exp\left[ \log(\min \sigma(T_{P_1})) |x| \right],
\end{equation}
where $\log(\min \sigma(T_{P_1}))<0$ since $\min \sigma(T_{P_1})<1$. This bound was proved in Corollary~2.5 of \cite{craster2022ssh} (where tighter estimates for the decay rate were computed using high-frequency homogenisation).

\begin{figure}
    \centering
    \begin{minipage}{0.65\linewidth}
    \includegraphics[trim=0 0 2cm 0,clip,width=\linewidth]{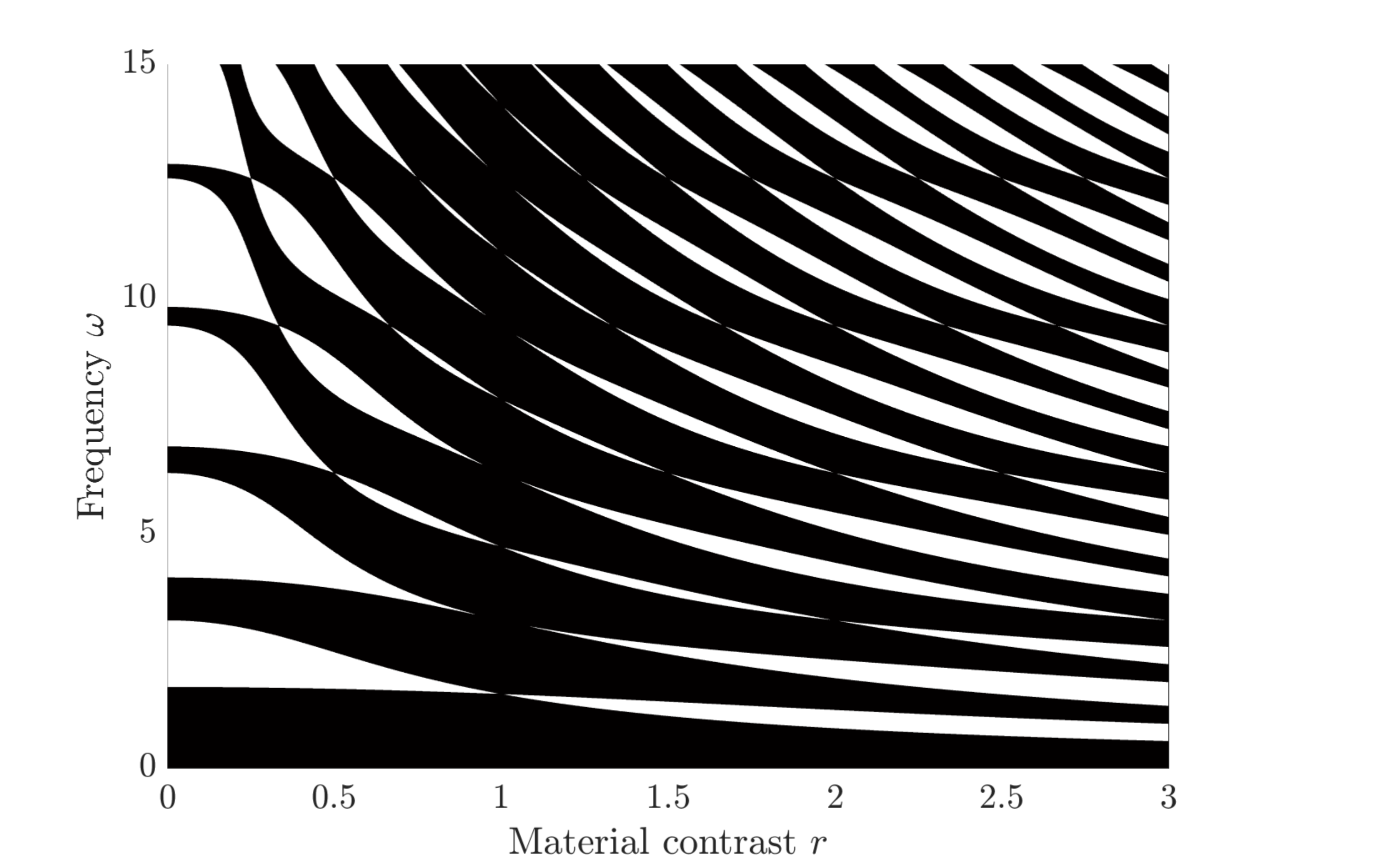}
    \end{minipage}
    \begin{minipage}{0.3\linewidth}
    \includegraphics[width=\linewidth]{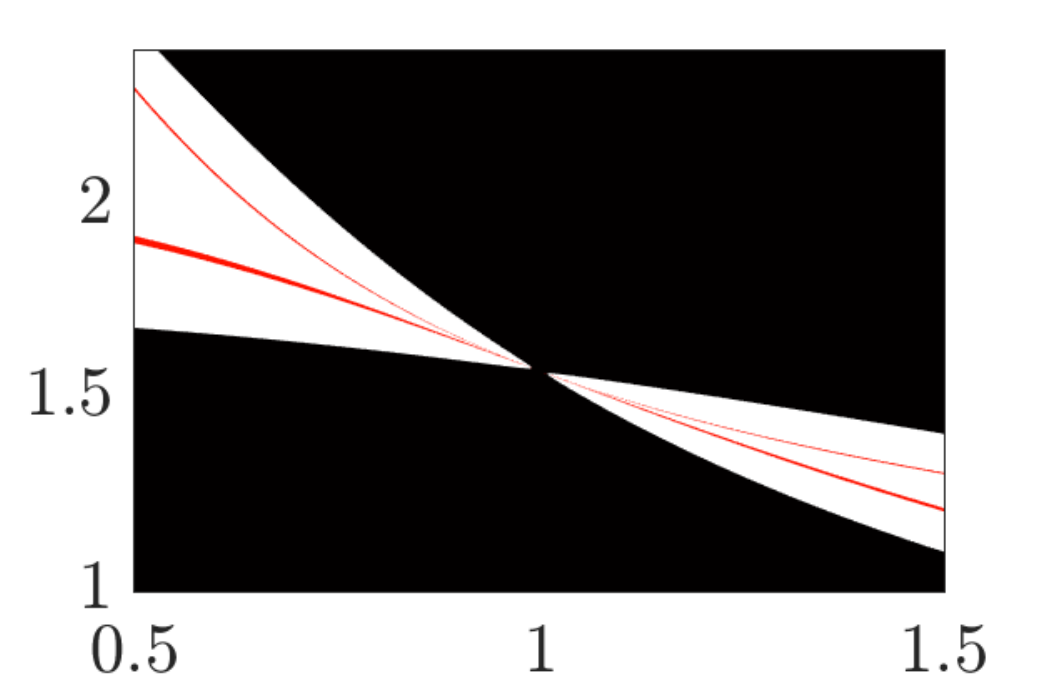}
    
    \vspace{0.2cm}
    
    \includegraphics[width=\linewidth]{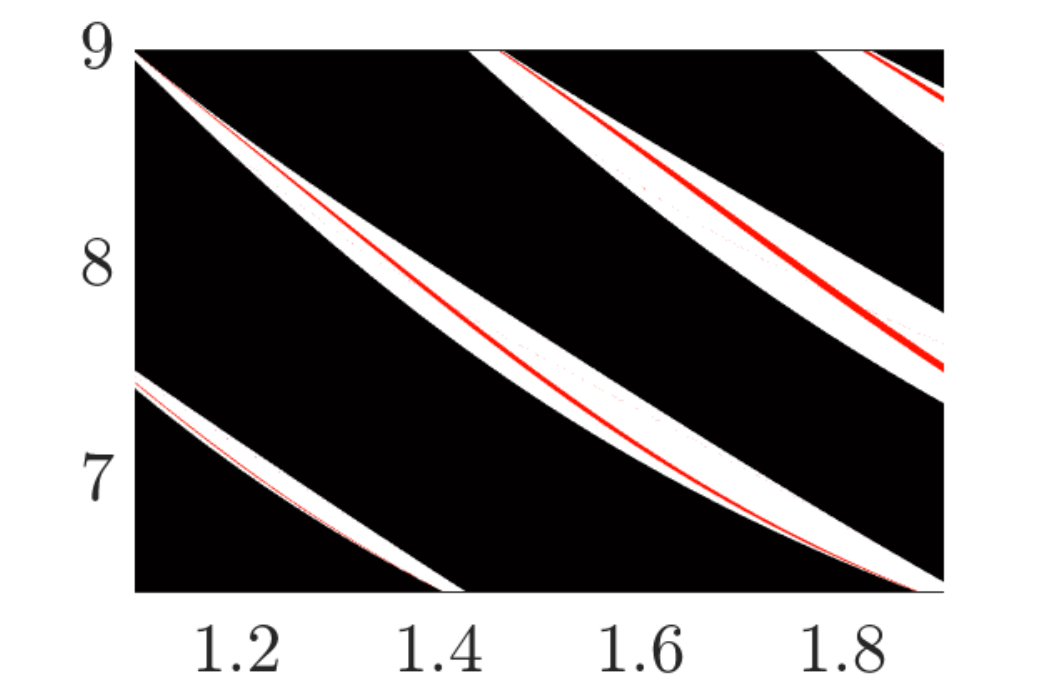}
    \end{minipage}
    \caption{The theory developed in this work can be used to characterise the spectral gaps of periodic media, as well as the localised edge modes created by introducing an axis of reflectional symmetry. On the left, the pattern of spectral gaps is shown, by checking if the condition \eqref{cond1} is satisfied. Spectral gaps are shown in white (this is analogous to \Cref{fig:Fibonacci_gaps}(a) for the Fibonacci material). On the right, values of the material contrast $r$ and the frequency $\omega$, that lie within the spectral gaps, are checked for the edge mode condition \eqref{cond2}. Point supporting edge modes are shown in red (this is the analogue of \Cref{fig:Fibonacci_edges}).}
    \label{fig:SpectralGapPeriodic}
\end{figure}

\begin{figure}
    \centering
    \begin{subfigure}{0.45\linewidth}
    \includegraphics[width=\linewidth]{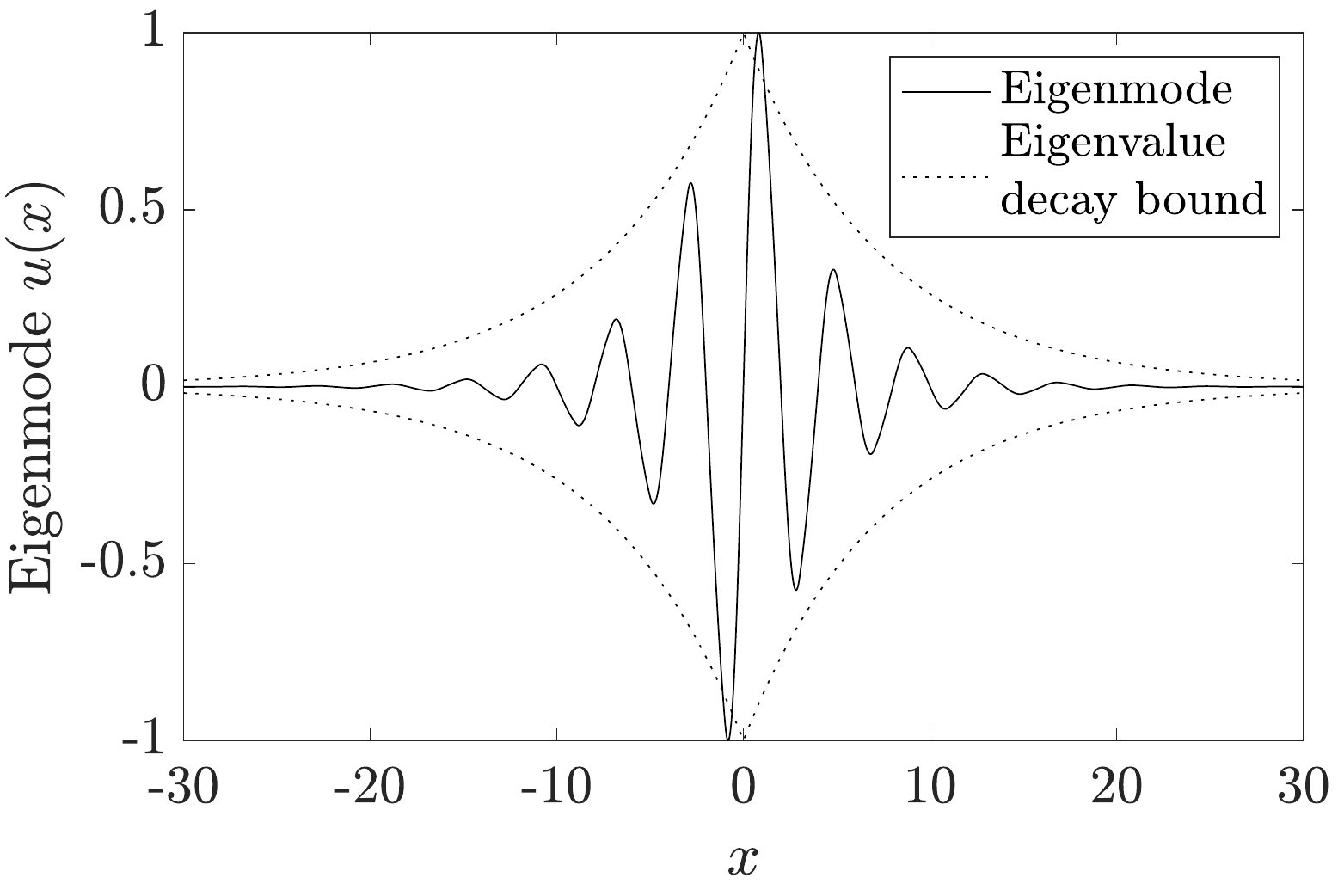}
    \caption{$\omega=1.916$.}
    \end{subfigure}
    \hspace{0.1cm}
    \begin{subfigure}{0.45\linewidth}
    \includegraphics[width=\linewidth]{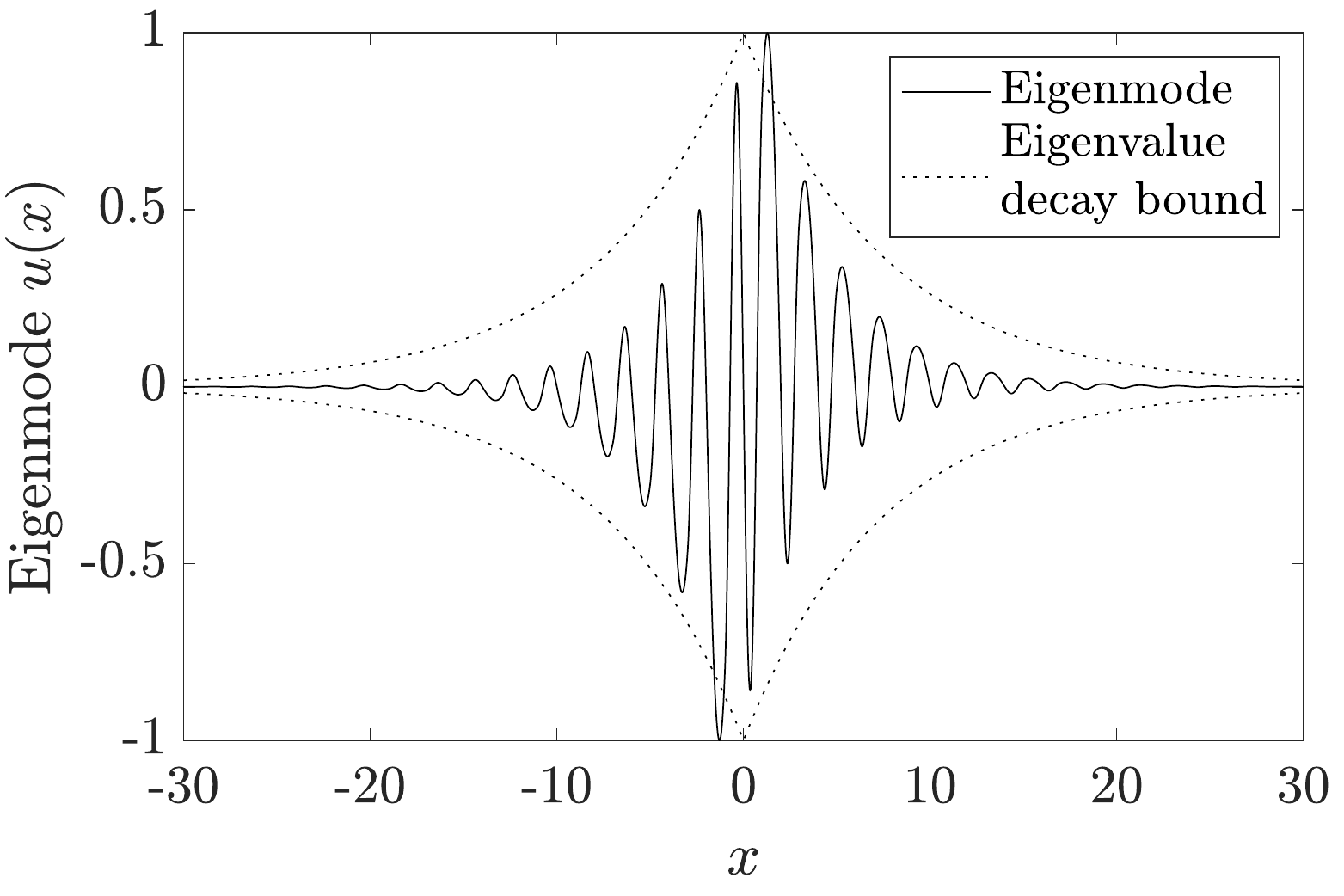}
    \caption{$\omega=4.383$.}
    \end{subfigure}
    \caption{The reflected periodic structure supports localised modes supported that decay away from the interface. These modes are simulated for a finite material with 55 pieces of material on either side of the interface, with material contrast parameter $r=2$. The modes decay exponentially quickly, at a rate that can be bound from above using the eigenvalues of the transfer matrices (shown with a dotted line).}
    \label{fig:Periodicmodes}
\end{figure}

\begin{remark}[The Mandelbrot set]
A natural question to ask, when faced with materials that yield second-order recursion relations like the one in \Cref{lem:recursion}, is whether any particular choices of reflected recursive materials lead to relations that have been studied previously. The most notable, and widely studied, of which is the quadratic recursion relation that defines the Mandelbrot set: the Mandelbrot set is given by $c\in\mathbb{C}$ such that the recursion relation $z_{N+1}=z_N^2+c$ with $z_1=c$ remains bounded for all $N\in\N$. If we have a periodic medium which is given by $P_{N+1}=P_N P_N$, then it holds that $T_{P_{N+1}}=T_{P_N}^2$ and taking the trace of this gives that $y_N=\tr(T_{P_N})$ satisfies
\begin{equation}
    y_N=y_{N-1}^2-2\det(T_{P_N})=y_{N-1}^2-2.
\end{equation}
Since $c=-2$ is in the Mandelbrot set, if $\omega$ is such that the starting material satisfies $\tr(T_{P_1}(\omega))=-2$, then $\omega$ is in a spectral band of the material (\emph{i.e.} \eqref{cond1} from \Cref{thm:spectrum} is not satisfied). For example, if $P_1$ is a homogeneous material of unit length, then we have that $\tr(T_{P_1}(\omega))=2\cos(\omega)$, from which we can deduce that $\omega=(2n+1)\pi$ is in a spectral band, for all $n\in\N$. While invoking properties of the Mandelbrot set to prove that a homogeneous material is transparent at certain frequencies is clearly a wasted effort, this suggests a valuable path of future investigation: using the known properties of widely studied fractals (see \emph{e.g.} \cite{barnsley2012fractals} for a summary) to reveal properties of novel quasicrystalline media without the need for extensive new analysis.
\end{remark}

\section{Robustness of quasicrystalline waveguides} \label{sec:robust}

An important consideration when designing waveguides that are based on making perturbations to induce edge modes is their stability with respect to imperfections. When devices are manufactured, errors and defects will inevitably be introduced and it is important to understand how these undesirable perturbations interact with the one that was made to create the interface. Thus, for the waveguide to be of practical use, it is important that it retains its waveguiding properties when random errors are introduced to the structure. 

Most metamaterial devices inherently have some robustness, in the sense that their properties depend continuously on small imperfections \cite{davies2021robustness}. However, a significant goal in the field is to optimise this robustness. The main breakthrough in this direction has been the discovery of so-called \emph{topologically protected} edge modes, which leverage topological properties of the underlying periodic media to create modes which experience enhanced robustness to imperfections \cite{khanikaev2013photonic, rechtsman2013photonic}. Previous studies have successfully defined topological indices associated to quasicrystalline media, such as the integrated density of states (IDS) \cite{apigo2018topological, xia2020topological}, and \cite{kraus2012equivalence} showed how these ideas could be extended to discrete Fibonacci systems. However, it is not generally clear how to link these properties to the robustness of edge modes induced by perturbations such as reflections.

Some studies have pointed to quasicrystalline media having appealing robustness properties \cite{marti2021edge, marti2022high, liu2022topological, ni2019observation}. A possible (and very approximate) explanation of this is that since quasicrystals already appear to be `disordered' on short scales, and the underlying order is only clear when you examine the long-range patterns in the structure, making local perturbations is less disruptive to its properties than in a periodic medium. This robustness was shown qualitatively for the specific case of reflection-induced edge modes (in arrays of point scatterers on elastic plates) by \cite{marti2021edge}. We can likewise examine this hypothesis in the setting of the reflection-induced waveguides studied here.

Our robustness study is based on perturbing the values of the material parameters $c$. Both the quasicrystalline Fibonacci example presented in \Cref{sec:fibonacci} and the periodic example presented in \Cref{sec:periodic} are based on piecewise constant building blocks of unit length. We introduce random imperfections to the structure by generating independent random variables $X_n\sim N(0,\sigma^2)$, for $n\in\mathbb{Z}$, and making perturbations on each constant building block according to
\begin{equation} \label{eq:perturb}
    c(x)\mapsto (1+X_n)c(x), \quad x\in [n,n+1), n\in\mathbb{Z}.
\end{equation}
We additionally enforce that $c>0$ by taking $\max\{c,\epsilon\}$ for some small $\epsilon$ (for the simulations presented here, $\epsilon=10^{-3}$ is used). Parts of the spectra, perturbed according to \eqref{eq:perturb}, are shown in \Cref{fig:robustness} for several different geometries. In each case, increasing perturbation size $\sigma$ is shown on the horizontal axis with the eigenfrequencies of the localised modes shown with red diamonds. In each case, the profile of the unperturbed ($\sigma=0$) localised mode is shown inset. The same finite-difference approximations that were used for Figures~\ref{fig:Fibonaccimodes} and~\ref{fig:Periodicmodes} were used for these simulations, with step size $h=0.01$.

\begin{figure}
    \centering
    
    \begin{subfigure}{0.45\linewidth}
    \begin{tikzpicture}
        \node[inner sep=0pt] (Fib) at (0,0)
    {\includegraphics[width=\textwidth]{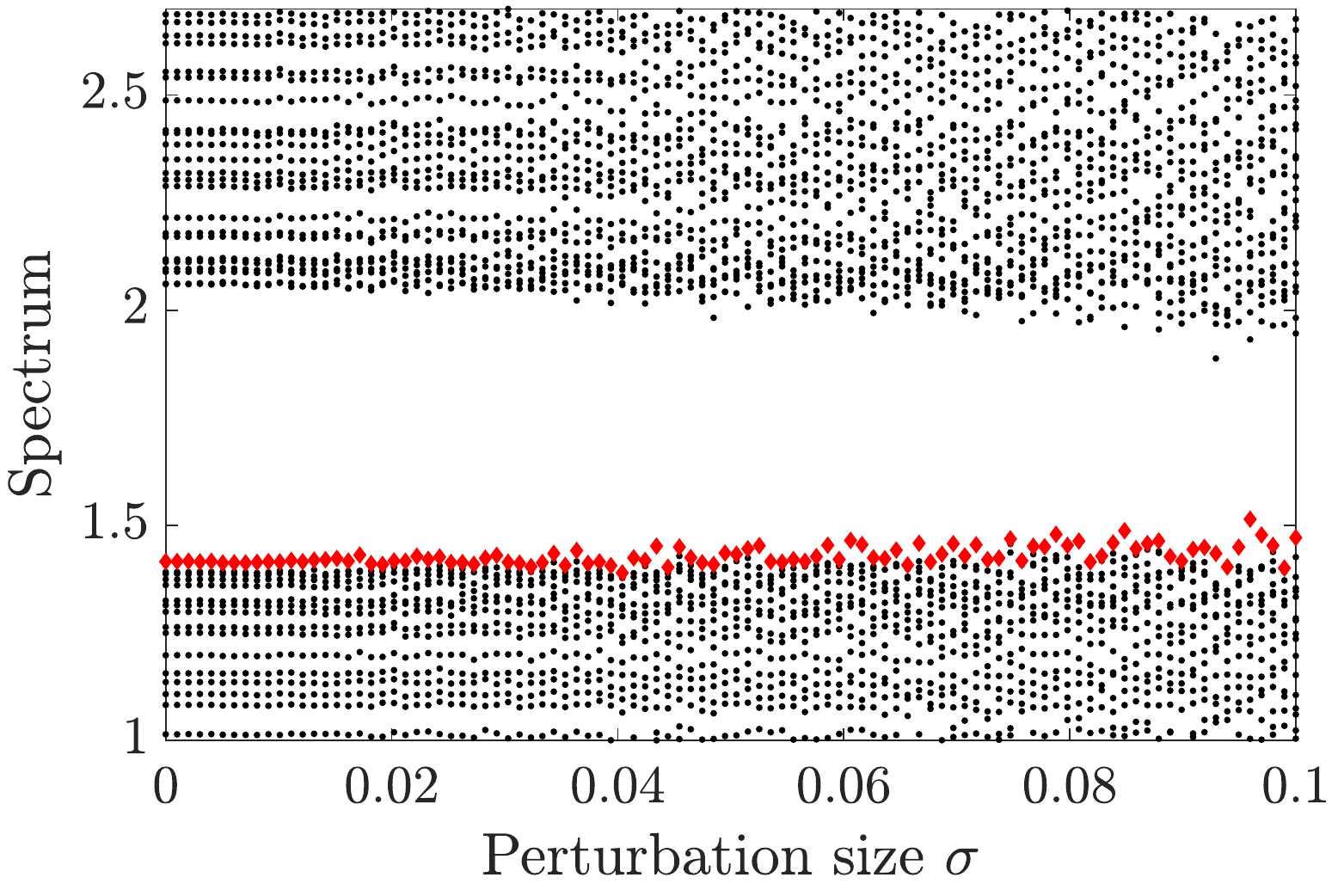}};
        \node[inner sep=0pt] (Fib) at (2.4,1.7)
    {\includegraphics[width=.2\textwidth]{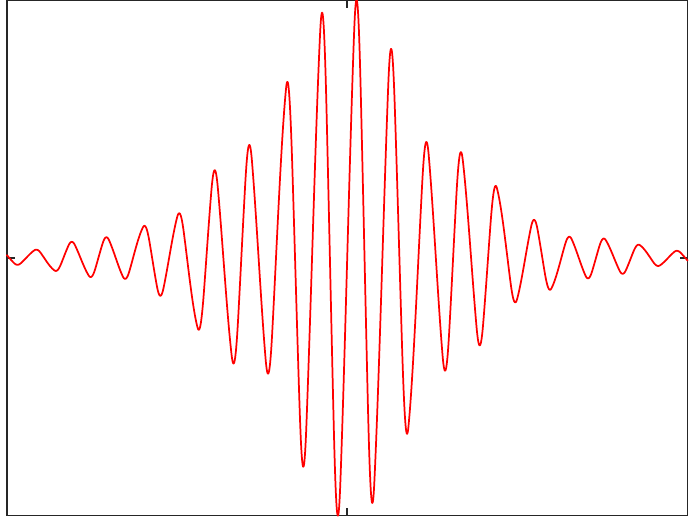}};
    \end{tikzpicture}
    \caption{Fibonacci.}
    \end{subfigure}
    \hspace{0.2cm}
    \begin{subfigure}{0.45\linewidth}
    \begin{tikzpicture}
        \node[inner sep=0pt] (Fib) at (0,0)
    {\includegraphics[width=\textwidth]{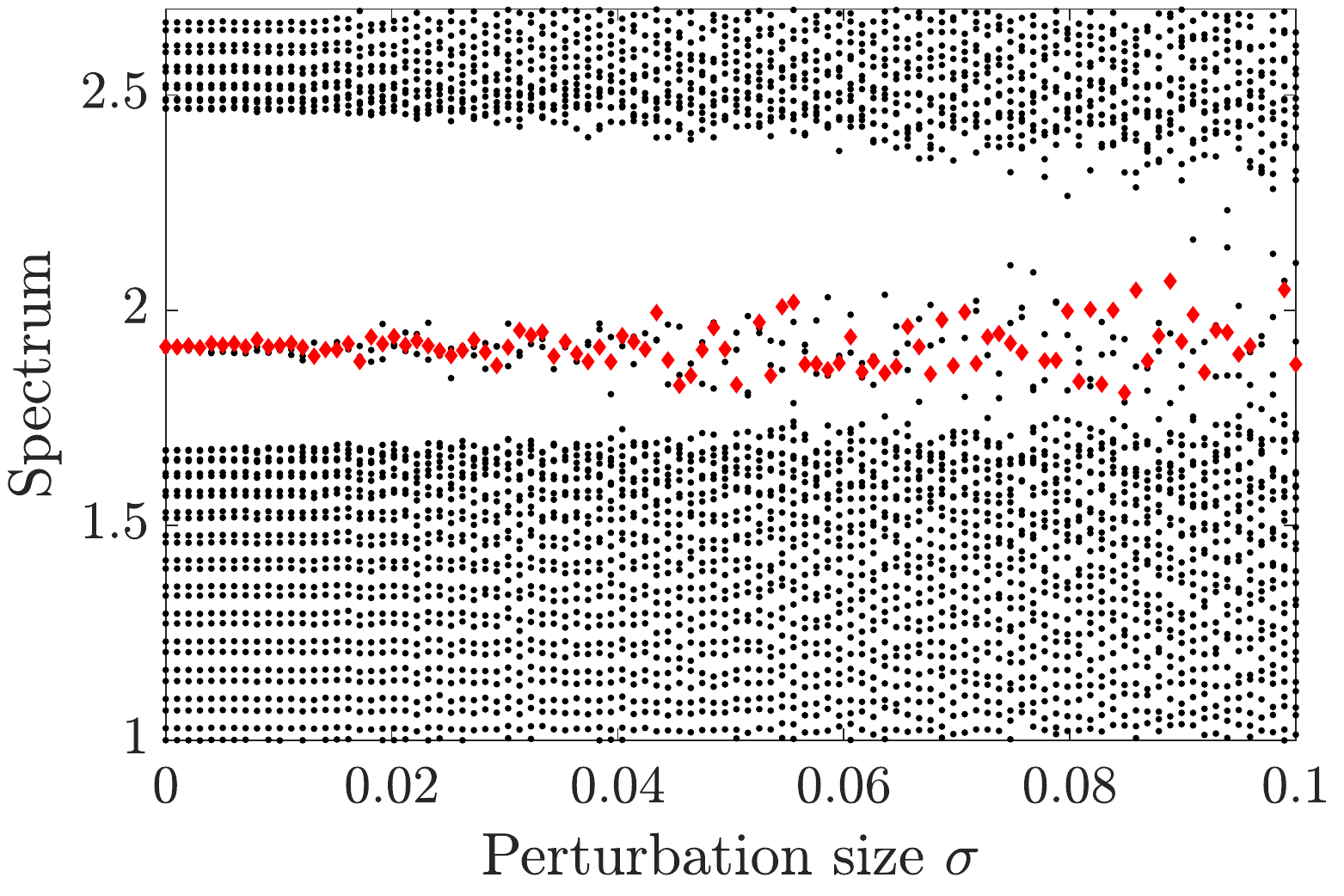}};
        \node[inner sep=0pt] (Fib) at (2.4,-0.9)
    {\includegraphics[width=.2\textwidth]{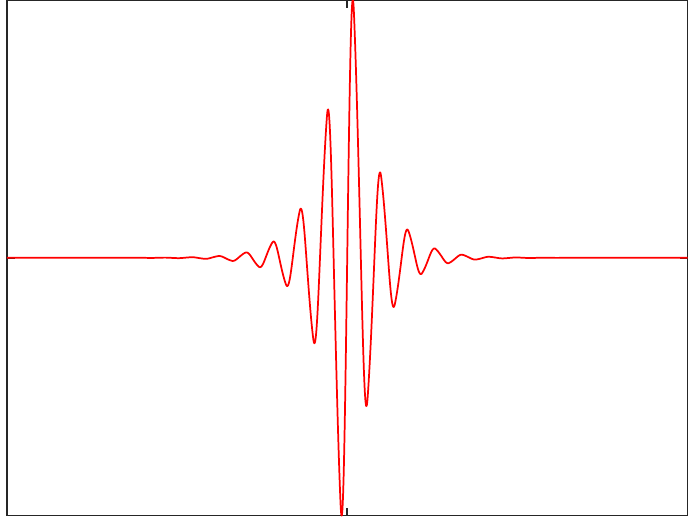}};
    \end{tikzpicture}
    \caption{Periodic.}
    \end{subfigure}
    
    \vspace{0.2cm}
    
    \begin{subfigure}{0.45\linewidth}
    \begin{tikzpicture}
        \node[inner sep=0pt] (Fib) at (0,0)
    {\includegraphics[width=\textwidth]{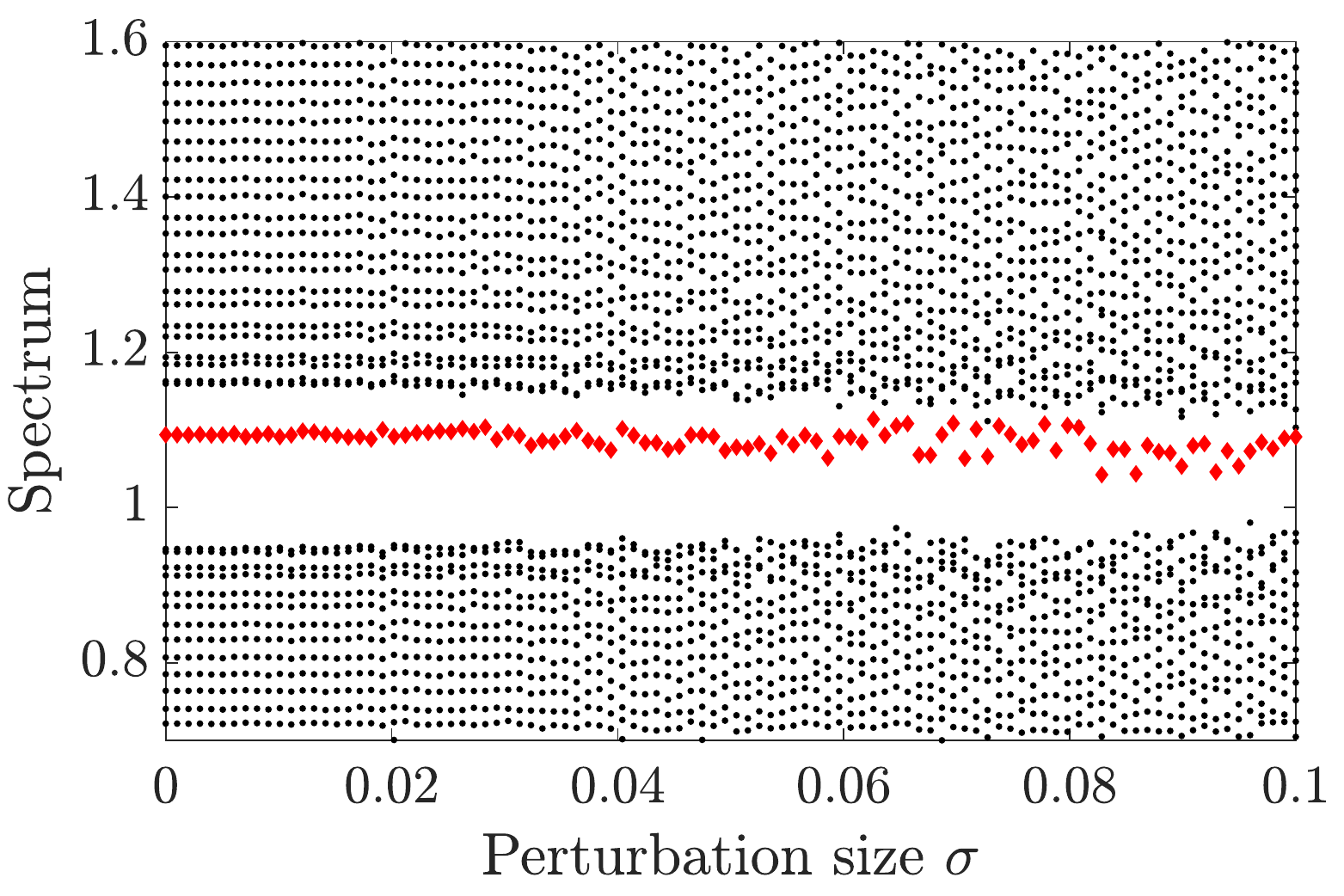}};
        \node[inner sep=0pt] (Fib) at (2.4,1.5)
    {\includegraphics[width=.2\textwidth]{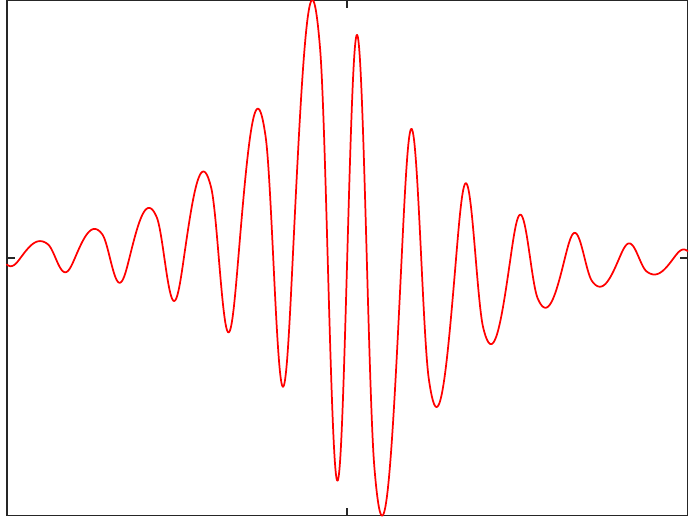}};
    \end{tikzpicture}
    \caption{Stretched periodic.}
    \end{subfigure}
    
    \caption{A comparison of the robustness of the reflection-induced edge modes in Fibonacci and periodic media. We repeatedly compute the spectrum for randomly perturbed media, for 100 different perturbations sizes $\sigma$ that are evenly spaced between 0 and 0.1, with $\sigma$ being the standard deviation of the normally distributed perturbations $N(0,\sigma^2)$. Each point corresponds to an element of the spectrum, with the red diamonds indicating modes that are localised (the unperturbed localised mode is shown inset). In each case, we simulate materials composed of 110 units (55 either side of the interface) with material contrast $r=2$.}
    \label{fig:robustness}
\end{figure}

The initial conclusion to draw from \Cref{fig:robustness}(a) is that the Fibonacci material exhibits strong robustness with respect to imperfections. Even for large perturbation sizes, the localised mode persists and its eigenfrequency appears to remain stable. This is particularly striking when compared with \Cref{fig:robustness}(b), which is the equivalent plot for the periodic material, as sketched in \Cref{fig:Periodic_sketch}. The eigenfrequency is greatly affected by the random perturbations and is mixed up with other localised modes that are created by the perturbations. However, another important difference between these two materials is the rate at which the unperturbed eigenmodes decay (the modes are shown inset in \Cref{fig:robustness}). The reflection induced edge mode in the periodic material decays significantly faster than the mode in the Fibonacci quasicrystal. If we construct a `stretched' version of the periodic material, with $P_1=AAAABBBB$ instead of $P_1=AB$, then the localised mode is similarly stretched by a factor of (approximately) four, such that its decay rate is similar to the mode in the Fibonacci quasicrystal. A robustness analysis for this configuration is shown in \Cref{fig:robustness}(c), where we see that the eigenfrequency of the localised mode is much more robust, with a degree of stability that is comparable to that of the Fibonacci quasicrystal.

The conclusion to draw from this brief robustness study is that while quasicrystalline waveguides may have some beneficial robustness properties, care is needed to make comparisons with \emph{e.g.} periodic waveguides. For example, the robustness benefits may not be clear if decay rates are matched. Whether the comparison should be made between materials with the same sized building blocks or between those that exhibit similar decay rates may depend on the specific application that one has in mind.

\section{Concluding remarks}

This work has established general results for characterising reflection-induced localised modes in materials formed by recursive tiling rules. We have demonstrated the power of this framework by using it to study these modes in the famous Fibonacci quasicrystal. This revealed the rich fractal structure of spectral gaps and demonstrated that adding a reflection creates exponentially localised modes with eigenfrequencies within these spectral gaps. We also showed that these symmetry-induced quasicrystalline waveguides may enjoy enhanced robustness properties, as observed by previous studies \cite{marti2021edge, marti2022high, liu2022topological, ni2019observation}, but that the comparison with the equivalent periodic waveguide requires care and depends on the specific application in mind.

One of the main challenges with studying quasicrystalline media is that it is difficult to formulate general statements and methods. Many previous studies have used methods that rely on the specific properties of the tiling rule, so yield results that are specifically tailored to the quasicrystal in question. This is also true of the results in \Cref{sec:fibonacci} of this work, for example, many of which rely on the recursion relation from \Cref{lem:recursion}. However, the general framework established in \Cref{sec:general} provides a starting point for more systematic studies, which could reveal the properties of the vast collection of unexplored quasicrystals.

A notable feature of the approach developed in this work is that the theoretical results are computationally expensive to implement numerically. The results are iterative in nature, as they rely on the recursive tiling rules that define the materials, and require many iterations to be computed in order to understand the materials' properties. This is common to many studies \cite{gei2010wave, gei2018waves, gei2020phononic, guenneau2008acoustic, hassouani2006surface}. However, as demonstrated in \Cref{sec:periodic} of this work, the application of this theory to periodic materials is significantly more straightforward. It might be possible to take advantage of the fact that quasicrystals are projections of higher-dimensional periodic structures to perform this analysis more concisely. Similar \emph{superspace lifting} approaches have been used previously to reveal new insight into the properties of quasicrystals \cite{johnson2008quasicrystal, bouchitte2010homogenization, wellander2018two}, however relating the eigenmodes of the quasicrystal and the lifted periodic system is non-trivial in general.

The results of this paper, and related works, are reducing the barriers to using quasicrystals in waveguiding devices. Thanks to the convenience of Floquet-Bloch analysis, periodic waveguides are ubiquitous in wave physics. Conversely, the wave transmission properties of quasicrystals can be fiendishly difficult to understand, so their use in applications has been limited. Given their exciting properties (such as large spectral gaps and robustness), the rapidly growing sophistication of techniques for studying quasicrystalline wave systems has the potential to unlock a new class of powerful wave devices based on quasicrystals.

\section*{Acknowledgements}

BD and RVC were funded by the EC-H2020 FETOpen project BOHEME under grant agreement 863179. The authors would like to thank Jordan Docking for pointing out the periodic orbit that was used in \Cref{lem:unboundedsequences} as well as Ian Hooper and Henry Putley for their insightful comments that were valuable in refining the robustness analysis presented in \Cref{sec:robust}. The code used for the numerical examples presented in this article is available at \url{https://doi.org/10.5281/zenodo.7006818}.

\section*{Competing interests}

The authors declare no competing interests.

\bibliographystyle{unsrt}
\bibliography{QPreflection}{}
\end{document}